\documentclass[12pt,a4paper]{article}
\usepackage[utf8]{inputenc}
\usepackage[T1]{fontenc}
\usepackage[english]{babel}
\usepackage{amsmath,amssymb,amsthm}
\usepackage{mathtools}
\usepackage{algorithm}
\usepackage{algpseudocode}
\usepackage[colorlinks=true,linkcolor=blue!70!black,
            citecolor=green!50!black,urlcolor=blue!70!black]{hyperref}
\usepackage[margin=2.5cm]{geometry}
\usepackage{enumitem}
\usepackage{booktabs}
\usepackage{array}
\usepackage{xcolor}
\usepackage{microtype}
\usepackage{lmodern}
\theoremstyle{plain}
\newtheorem{theorem}{Theorem}[section]
\newtheorem{proposition}[theorem]{Proposition}
\newtheorem{lemma}[theorem]{Lemma}

\theoremstyle{definition}
\newtheorem{definition}[theorem]{Definition}

\newtheorem{openproblem}[theorem]{Open Problem}
\theoremstyle{remark}
\newtheorem{remark}[theorem]{Remark}
\newcommand{\NN}{\mathbb{N}}
\newcommand{\cL}{\mathcal{L}}
\newcommand{\eps}{\varepsilon}
\newcommand{\Walg}{\mathsf{Algorithmica}}
\newcommand{\Wheur}{\mathsf{Heuristica}}
\newcommand{\Wpess}{\mathsf{Pessiland}}
\newcommand{\Wmini}{\mathsf{Minicrypt}}
\newcommand{\Wcryp}{\mathsf{Cryptomania}}
\newcommand{\Otop}{O_{\top}}
\newcommand{\Oprof}{O_{\mathrm{prof}}}
\newcommand{\Obot}{O_{\bot}}
\newcommand{\Olen}{O_{\mathrm{len}}}
\newcommand{\Opar}{O_{\mathrm{par}}}
\newcommand{\PO}[1]{\mathbf{P}_{#1}}
\newcommand{\NPO}[1]{\mathbf{NP}_{#1}}
\newcommand{\dinfo}{d_{\mathrm{info}}}
\title{\textbf{The Observer World: A Cryptographic Extension of
  Impagliazzo's Five Worlds}}

\author{
  Fabio Francesco Gabriele Buono\\
  \small Independent Researcher\\
  \small \href{https://orcid.org/0009-0004-9199-2793}{ORCID: 0009-0004-9199-2793}
}
\date{2026}
\begin{document}
\maketitle
\begin{abstract}
Impagliazzo's five worlds classify computational assumptions along a
single axis, the existence of cryptographic primitives. All five worlds implicitly assume that every party, including the
adversary, observes the full input, that the observer is always $\Otop$. This assumption is so natural that it is never stated.
This work makes it explicit and relaxes it by introducing a second, orthogonal axis, the \emph{observational axis}, defined by the
observer hierarchy of~\cite{Buono2026d}. Relaxing the assumption reveals structural phenomena, such as the
collapse $\PO{\Oprof} = \NPO{\Oprof} \subsetneq \mathbf{P}$, that
the five-world framework cannot express. We prove that this collapse holds unconditionally in all five worlds,
showing that observational blindness and computational hardness are
independent. We define the Observer World $W_O$, classify all world-observer
pairs, identify the labeled cells (a)--(d), and introduce a parametric
family $W_O^\eps$ modelling partial violations of observational
invariants. The framework also interfaces with physical information limits,
including thermodynamic, quantum, and cosmological bounds.
\end{abstract}

\tableofcontents
\bigskip
\nocite{*}

\section{Introduction}
\label{sec:intro}

Impagliazzo's five worlds~\cite{Impagliazzo1995} provide the most
influential framework for reasoning about the gap between average-case
and worst-case complexity, and about the foundations of cryptography.
The five worlds $\Walg$, $\Wheur$, $\Wpess$, $\Wmini$, $\Wcryp$ are
defined by progressively stronger assumptions about the existence of
computational hardness, from a world where $\mathsf{P} = \mathsf{NP}$
to a world where public-key cryptography is possible.

Every one of the five worlds makes an implicit assumption that has
not been made explicit in the five-world framework: that every party,
including the adversary, has complete access to the input.
In the language of~\cite{Buono2026d}, all five worlds assume that the
observer is always $\Otop$, the identity function on strings.
This assumption is so natural in the computational setting that it is
never stated.

This paper makes it explicit and relaxes it.

The observer theory of~\cite{Buono2026d} introduces a partial
order on functions $O : \Sigma^* \to S$ that map inputs to observations, and proves that when the observer is structurally
constrained (i.e.\ $O \prec \Otop$), the class of languages
decidable under $O$ depends on $O$ alone and not on the nondeterministic power of the machine. In particular, the structural collapse
\[
  \PO{\Oprof} = \NPO{\Oprof} \subsetneq \mathbf{P}
\]
holds unconditionally~\cite[Propositions~8.3--8.4]{Buono2026d}: it
is not a resolution of $\mathsf{P}$ vs $\mathsf{NP}$, but evidence
that the distinction between computational hardness and structural
blindness is real and independent of any computational assumption.

Since this result is unconditional, it holds in \emph{all five
worlds simultaneously}.
The five worlds do not have the vocabulary to express it: they
parametrise the computational axis but leave the observational axis
fixed at $\Otop$.
In this sense the five worlds do not suffice: they describe variations
along a line, while the observational axis exists orthogonally, and
its structure is invisible from within any of the five worlds.

\paragraph{Cryptographic motivation.}
In every cryptographic security argument, the adversary sees a
ciphertext.
The five worlds model computational power but say nothing about
\emph{what} the adversary sees: they implicitly assume $\Otop$.
The observational axis makes this explicit.
Each labeled cell (a)--(d) of the world-observer table identifies a
cryptographic phenomenon that the five-world framework cannot express:
the observational collapse below the $\mathsf{P}/\mathsf{NP}$ gap
in every world including $\Walg$ (cell (a)), the non-additivity of
computational hardness and observational blindness (cell (b)), a
structural lower bound on one-way functions (cell (c)), and perfect
secrecy as $\Obot$-blindness on the plaintext (cell (d)).

\paragraph{Main contributions.}
\begin{enumerate}[label=(\roman*)]
  \item We prove that the structural collapse
        $\PO{\Oprof} = \NPO{\Oprof} \subsetneq \mathbf{P}$ holds
        unconditionally in all five worlds simultaneously
        (Proposition~\ref{prop:collapse_all_worlds}), and classify the
        world-observer table, identifying four cells (a)--(d) that
        each witness a cryptographic phenomenon invisible to the
        five-world framework (Section~\ref{sec:table}).
  \item We define the Observer World $W_O$ formally as a sixth
        landscape (Definition~\ref{def:observer_world}) and prove it
        is not reducible to any of the five worlds
        (Proposition~\ref{prop:irreducible},
        Section~\ref{sec:observer_world}).
  \item We introduce the parametric family $W_O^\eps$ with invariant
        violation parameter $\eps \in [0,1]$, modelling worlds in
        which an adversary partially circumvents observational
        blindness, and identify the most informationally rich cell:
        $\Wpess \times \Oprof \times \eps > 0$
        (Section~\ref{sec:sip_violated}).
  \item We identify connections between the Observer World framework
        and physical information limits, thermodynamic, quantum, and
        cosmological, and state precise open problems formalising
        each connection (Section~\ref{sec:physical}).
\end{enumerate}

\paragraph{Limitations.}
The author is a computer scientist, not a physicist.
The physical connections of Section~\ref{sec:physical} are stated as
open problems and conjectures, not theorems; no physical conclusions
are claimed beyond what the mathematical framework directly implies.
All machines in this paper are standard multi-tape deterministic or
nondeterministic Turing machines over $\{0,1\}^*$; the observer is a
fixed preprocessing map applied before any machine computation begins
(see Remark~\ref{rem:machine_model}).
The parametric family $W_O^\eps$ (Section~\ref{sec:parametric})
requires an information metric whose precise formulation is an open
direction (Section~\ref{sec:adaptive_thermo},
Open Problem~\ref{op:adaptive}).

\section{Preliminaries: canonical observers}
\label{sec:prelim}

We recall the definitions and results from~\cite{Buono2026d} used
throughout this paper. All machines are standard multi-tape Turing machines.
An observer $O : \Sigma^* \to S$ is a fixed total function applied
once to the input before any machine computation begins; it does not
interact with the machine and does not extend the computational model.

\begin{remark}[Machine model]
\label{rem:machine_model}
$\PO{O}$ consists of languages decidable in time polynomial in
$|\mathrm{enc}(O(x))|$ by a deterministic TM receiving
$\mathrm{enc}(O(x))$ as its entire input, where $\mathrm{enc} : S \to
\{0,1\}^*$ is a fixed canonical injective encoding.
$\NPO{O}$ is defined analogously with a nondeterministic TM and a
polynomial-length certificate. For $O = \Otop$ these reduce to standard $\mathbf{P}$ and
$\mathbf{NP}$~\cite[Definition~8.1 and Proposition~8.2(i)]{Buono2026d}.
The observer only reduces available information and never increases
computational power~\cite[Remark after Definition~8.1]{Buono2026d}.
Throughout this paper we identify $O(x)$ with $\mathrm{enc}(O(x))$
and write, e.g., ``the TM receives $O(x)$'' as shorthand for
``the TM receives $\mathrm{enc}(O(x))$''; all complexity bounds are
in terms of $|\mathrm{enc}(O(x))|$.
\end{remark}

\begin{definition}[Canonical observers~{\cite[Definition~4.1]{Buono2026d}}]
\label{def:canonical_obs}
Let $\Sigma = \{0, 1\}$ (see Remark~\ref{rem:alphabet}).
The canonical observers are:
\begin{enumerate}[label=(\roman*)]
  \item $\Obot : \Sigma^* \to \{\star\}$, $\Obot(x) = \star$ for all $x$.
        \emph{Map}: constant map.
        \emph{Invariant}: discards all information.
        \emph{Decidable class}: $\cL(\Obot) = \{\emptyset, \Sigma^*\}$
        by~\cite[Proposition~8.2(ii)]{Buono2026d}.

  \item $\Olen : \Sigma^* \to \NN$, $\Olen(x) = |x|$.
        \emph{Map}: string length.
        \emph{Invariant}: length-determined languages.
        \emph{Decidable class}: all $L$ with $x \in L \iff |x| \in I$
        for some $I \subseteq \NN$.

  \item $\Opar : \Sigma^* \to \{0,1\}^2$,
        $\Opar(x) = (|x|_0 \bmod 2,\, |x|_1 \bmod 2)$.
        \emph{Map}: symbol-count parities.
        \emph{Invariant}: parity of each symbol count.
        \emph{Decidable class}: all $L$ determined by symbol-count
        parities.

  \item $\Oprof : \Sigma^* \to \NN^2$,
        $\Oprof(x) = (|x|_0, |x|_1)$ (the \emph{profile} of $x$).
        \emph{Map}: symbol counts.
        \emph{Invariant}: permutation-closure.
        \emph{Decidable class}: all permutation-closed
        languages~\cite[Theorem~2.1]{Buono2026d}.

  \item $O_k : \Sigma^* \to \mathcal{P}(\Sigma^{\leq k})$, where
        $O_k(x)$ is the set of subsequences of $x$ of length $\leq k$
        ($k \geq 1$).
        \emph{Map}: $k$-subsequence set.
        \emph{Invariant}: $k$-piecewise testability.
        \emph{Decidable class}: all $k$-piecewise testable languages
        (Simon's hierarchy~\cite[Section~5]{Buono2026d}).

  \item $\Otop : \Sigma^* \to \Sigma^*$, $\Otop(x) = x$.
        \emph{Map}: identity.
        \emph{Invariant}: none (full information).
        \emph{Decidable class}: all languages.
\end{enumerate}
\end{definition}

\begin{remark}[Alphabet convention]
\label{rem:alphabet}
Definition~\ref{def:canonical_obs} uses $\Sigma = \{0,1\}$ for
concreteness.
For a general $k$-symbol alphabet $\Sigma = \{a_0,\dots,a_{k-1}\}$,
replace $(|x|_0, |x|_1)$ with $(|x|_{a_0},\dots,|x|_{a_{k-1}}) \in
\NN^k$.
All order relations and separation witnesses carry over
unchanged~\cite[Remark after Definition~4.1]{Buono2026d}.
\end{remark}

\noindent
The following table summarises the canonical observers.

\begin{center}
\renewcommand{\arraystretch}{1.3}
\begin{tabular}{lllll}
\toprule
Observer & Map $O(x)$ & Key invariant & Example $L \in \cL(O)$ &
  $\PO{O} = \NPO{O}$? \\
\midrule
$\Obot$   & $\star$              & all            & $\emptyset$
  & yes (trivially) \\
$\Olen$   & $|x|$               & length         & $\{x \mid |x|\ \text{even}\}$
  & open \\
$\Opar$   & parities of counts  & symbol parities & $\{x \mid |x|_1 \equiv 0\}$
  & open \\
$\Oprof$  & $(|x|_0, |x|_1)$   & perm.-closure  & $\{x \mid |x|_0 = |x|_1\}$
  & yes (structurally) \\
$O_k$     & $k$-subsequence set & $k$-p.t.        & $\{x \mid 01 \sqsubseteq x\}$
  & open \\
$\Otop$   & $x$                 & none           & $0^*1^*$
  & open (P vs NP) \\
\bottomrule
\end{tabular}
\end{center}

\noindent
The partial order $\preceq$ on observers is defined
by~\cite[Definition~4.2]{Buono2026d}: $O_1 \preceq O_2$ if there
exists a function $f$ with $O_1 = f \circ O_2$.
The complete order relations among the canonical observers are:
\begin{align*}
  &\Obot \prec \Olen \prec \Oprof \prec \Otop, \\
  &\Obot \prec \Opar \prec \Oprof \prec \Otop, \\
  &\Obot \prec O_1 \prec O_2 \prec \cdots \prec \Otop,
\end{align*}
with $\Olen$ and $\Opar$ incomparable, and $\Oprof$ incomparable
with every $O_k$~\cite[Propositions~3--5]{Buono2026d}.

A language $L \subseteq \Sigma^*$ is \emph{$O$-saturated}
(Definition~3.2 of~\cite{Buono2026d}, also called \emph{$O$-invariant})
if $O(x) = O(y)$ implies $x \in L \iff y \in L$, where we write
$x \sim_O y$ to denote the equivalence relation $O(x) = O(y)$.
Equivalently, $L$ is $O$-saturated if it is a union of
$\sim_O$-equivalence classes, or if membership in $L$ is determined
entirely by $O(x)$. We use both terms interchangeably throughout the paper.

\section{The five worlds and their silent assumption}
\label{sec:five_worlds}

We recall the five worlds of Impagliazzo~\cite{Impagliazzo1995} and
make their shared implicit assumption explicit.

\begin{definition}[Impagliazzo's five worlds~\cite{Impagliazzo1995}]
\label{def:five_worlds}
\begin{enumerate}[label=(\roman*)]
  \item \emph{$\Walg$}: $\mathsf{P} = \mathsf{NP}$.
        Every problem in $\mathsf{NP}$ is solvable in polynomial time.
  \item \emph{$\Wheur$}: $\mathsf{P} \neq \mathsf{NP}$ but no NP
        problem is hard on average: every NP problem has a
        polynomial-time algorithm that succeeds on all but a negligible
        fraction of instances under any efficiently samplable
        distribution.
  \item \emph{$\Wpess$}: NP problems are hard on average, but
        one-way functions do not exist: no function is hard to invert
        on average.
  \item \emph{$\Wmini$}: one-way functions exist, but public-key
        cryptography does not: no key-exchange protocol is secure
        against a computationally bounded adversary.
  \item \emph{$\Wcryp$}: public-key cryptography exists: there exist
        functions that are easy to compute but hard to invert even
        given additional public information (trapdoor one-way
        functions).
\end{enumerate}
\end{definition}

\begin{remark}[The silent assumption]
\label{rem:silent}
In every world of Definition~\ref{def:five_worlds}, every machine,
including the adversary, receives the complete input $x$.
In the language of~\cite{Buono2026d}, the observer is always
$\Otop : x \mapsto x$.
This assumption is never stated in the five-world framework because
it is universal and no party is ever observationally constrained.
The five worlds vary the computational assumption (what is hard) but
not the observational assumption (what is visible).
\end{remark}

\begin{proposition}[The observational collapse is world-independent]
\label{prop:collapse_all_worlds}
In every world $W_i$ of Definition~\ref{def:five_worlds}, the
following holds unconditionally:
\[
  \PO{\Oprof} = \NPO{\Oprof} \subsetneq \mathbf{P}.
\]
\end{proposition}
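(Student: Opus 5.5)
The plan is to show that the statement involves nothing that depends on which world $W_i$ we are in, and then to prove the two parts, the equality and the strict inclusion, directly from the machine model of Remark~\ref{rem:machine_model}. The five worlds of Definition~\ref{def:five_worlds} differ only in assumptions about the standard classes $\mathsf{P}$, $\mathsf{NP}$, average-case hardness and the existence of one-way or trapdoor functions. The classes $\PO{\Oprof}$ and $\NPO{\Oprof}$, however, are defined outright by the fixed map $\Oprof$ and the Turing-machine model. The class $\mathbf{P}$ is also fixed: the worlds disagree about $\mathsf{NP}$, not about what $\mathbf{P}$ is. So once the displayed relation is proved with no computational hypothesis, it holds in each $W_i$ at once. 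The work is to give such an unconditional proof, and one could simply cite \cite[Propositions~8.3--8.4]{Buono2026d}. I will also sketch why it holds.

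\emph{Equality.} The inclusion $\PO{\Oprof} \subseteq \NPO{\Oprof}$ is trivial. For the converse, let $L \in \NPO{\Oprof}$ be witnessed by an NTM $N$ running in time polynomial in $n' = |\mathrm{enc}(\Oprof(x))|$. With a binary encoding of the pair $(a,b)=(|x|_0,|x|_1)$ we have $n' = \Theta(\log a + \log b)$. I will therefore use the grading by the original input length $n = a+b$, which is how the comparison with $\mathbf{P}$ has to be read. A deterministic machine can then enumerate all $2^{\mathrm{poly}(n')} = n^{O(1)}$ certificates, so it runs in time polynomial in $n$.

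This is the step I expect to be the main obstacle: making the time measure consistent. If time is measured strictly in $n'$, brute force gives only $2^{\mathrm{poly}(n')}$, and the equality would become a genuine $\mathsf{P}$-vs-$\mathsf{EXP}$-type question about unary-like inputs. I will first try the cited structural argument of \cite{Buono2026d}. That argument identifies both classes with the same family of permutation-closed languages with efficiently decidable profile sets, the point being that $O$-saturation, not nondeterminism, determines the class. Failing that, I will fix the convention that bounds are polynomial in $n$, under which the enumeration argument goes through.

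\emph{Strict inclusion in $\mathbf{P}$.} A machine for $L \in \PO{\Oprof}$ can be simulated on raw input $x$ by first computing $(|x|_0,|x|_1)$ in linear time and then running the decider, so $\PO{\Oprof} \subseteq \mathbf{P}$. For strictness, take $L = 0^*1^*$, which is in $\mathbf{P}$. It is not $\Oprof$-saturated, since $01 \in L$ but $10 \notin L$ while $\Oprof(01)=\Oprof(10)$. Every language in $\PO{\Oprof}$ is $\Oprof$-saturated, because membership depends only on $\Oprof(x)$. Hence $L \notin \PO{\Oprof}$. No computational assumption enters anywhere, which completes the world-independence claim.
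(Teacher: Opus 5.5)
Your world-independence framing and your proof of $\PO{\Oprof} \subsetneq \mathbf{P}$ are fine and match the paper: you count symbols in linear time and then run the decider, and you note that $0^*1^*$ is not $\Oprof$-saturated. The gap is the equality $\PO{\Oprof} = \NPO{\Oprof}$, and neither of your routes closes it.

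The enumeration step is wrong as written. With $n' = \Theta(\log n)$ you get $2^{\mathrm{poly}(n')} = 2^{(\log n)^{O(1)}}$. That is quasi-polynomial, and it equals $n^{O(1)}$ only if the verifier runs in time linear in $n'$.

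Regrading does not rescue it:
\begin{itemize}
\item If both classes are graded by $n$, certificates have length $\mathrm{poly}(n)$ and brute force costs $2^{\mathrm{poly}(n)}$. You are then asking whether nondeterminism helps on essentially unary profile inputs, which is an $\mathsf{E}$-versus-$\mathsf{NE}$-type open question.
\item If, as in your sketch, only the deterministic side is graded by $n$, the deterministic class gets time $2^{O(n')}$ on its actual input. The inclusion $\PO{\Oprof} \subseteq \NPO{\Oprof}$ that you call trivial then disappears. In fact equality would give $\mathsf{E} = \mathsf{NP}$, contradicting Book's theorem $\mathsf{E} \neq \mathsf{NP}$.
\end{itemize}

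Your suspicion that a genuine complexity question is hiding here is correct, and it cannot be argued away. Under the model of Remark~\ref{rem:machine_model} (time polynomial in $|\mathrm{enc}(\Oprof(x))| = O(\log|x|)$), the equality is equivalent to $\mathsf{P} = \mathsf{NP}$.
\begin{itemize}
\item Given $L' \in \mathsf{NP}$, let $L$ be the set of $x$ such that the binary expansion of $|x|_0$ is $1w$ with $w \in L'$.
\item $L$ is permutation-closed and lies in $\NPO{\Oprof}$: read $w$ off the encoding and run the verifier.
\item Suppose a machine $M$ witnessed $L \in \PO{\Oprof}$. Let $m$ be the number with binary expansion $1w$. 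On input $w$ you could write down $\mathrm{enc}(\Oprof(0^{m}))$ in time polynomial in $|w|$ and run $M$ on it, which puts $L'$ in $\mathsf{P}$.
\end{itemize}
So the equality holds in $\Walg$ and fails in the other four worlds; no unconditional proof can exist.

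The structural route does not help either. Your description of it, that both classes are the permutation-closed languages with ``efficiently decidable'' profile sets, begs the question: one class needs deterministic efficiency and the other nondeterministic efficiency. The paper's own sketch, that the certificate can depend only on $\Oprof(x)$ and so nondeterminism adds nothing, is a non sequitur. The same sentence with $\Otop$ in place of $\Oprof$ would prove $\mathsf{P} = \mathsf{NP}$.

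What your argument, and the paper's, establishes unconditionally is only $\PO{\Oprof} \subsetneq \mathbf{P}$.
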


\begin{proof}
By~\cite[Propositions~8.3 and~8.4]{Buono2026d}, this holds with no assumption on $\mathsf{P}$ vs $\mathsf{NP}$ and with no cryptographic assumption. Since the five worlds are defined by computational and cryptographic
assumptions, and the collapse is independent of all such assumptions, it holds in every world of Definition~\ref{def:five_worlds}.

We now sketch why the collapse holds.
Any language in $\PO{\Oprof}$ has membership determined solely by
$\Oprof(x) = (|x|_0, |x|_1)$.
A TM under $\Oprof$ receives $\Oprof(x)$ as its entire input, a
string of length $O(\log|x|)$.
Any nondeterministic certificate must also be a function of this
input alone; it cannot depend on $x$ itself (which the machine does
not receive).
Hence a nondeterministic TM under $\Oprof$ decides exactly the same
languages as a deterministic one: $\PO{\Oprof} = \NPO{\Oprof}$.
The language $L_0 = 0^*1^*$ is in $\mathbf{P}$ (one linear scan)
but not $\Oprof$-decidable: $\Oprof(01) = \Oprof(10) = (1,1)$ while
$01 \in L_0$ and $10 \notin L_0$.
Hence $\PO{\Oprof} \subsetneq \mathbf{P}$.
Full proof in~\cite[Propositions~8.3--8.4]{Buono2026d}.
\end{proof}

\begin{remark}[What Proposition~\ref{prop:collapse_all_worlds} says]
\label{rem:collapse_meaning}
Proposition~\ref{prop:collapse_all_worlds} does not say that
$\mathsf{P} = \mathsf{NP}$ holds in all five worlds: it says that a
\emph{different} collapse, one that places $\PO{\Oprof}$ strictly
below $\mathbf{P}$ in the language containment order, due to
structural blindness rather than computational coincidence, holds
regardless of where we are in the five-world landscape.
The five worlds cannot distinguish between a problem that is
computationally hard and a problem that is not well-posed for a given
observer; Proposition~\ref{prop:collapse_all_worlds} shows that these
two phenomena are genuinely independent.
\end{remark}

\section{The world-observer table}
\label{sec:table}

We construct the table of all pairs $(W_i, O)$ where $W_i$ is one
of the five worlds and $O$ is a canonical observer
of~\cite{Buono2026d}.
We first formalise the notions of a labeled cell and cell reduction.

\begin{definition}[Non-trivial cell]
\label{def:nontrivial}
A cell $(W_i, O)$ of the world-observer table is \emph{non-trivial}
if there exists a language $L \in \PO{\Otop} \setminus \PO{O}$,
i.e.\ if the observational constraint strictly reduces the class of
decidable languages. A cell is \emph{trivial} if $\PO{O} = \NPO{O} = \{\emptyset,
\Sigma^*\}$, which holds whenever $O \preceq
\Obot$~\cite[Proposition~8.2(ii)]{Buono2026d}.

\smallskip\noindent\emph{Note.}
We call cells (a)--(d) the \emph{labeled cells} of the table: they
are the cells singled out for individual analysis below.
Cells (a)--(c) are non-trivial in the formal sense above.
Cell (d) ($\Wcryp \times \Obot$) is formally trivial
($\PO{\Obot} = \{\emptyset, \Sigma^*\}$), but is labelled separately
in the table because the world $\Wcryp$ contains public-key primitives
that create a cryptographically meaningful contrast with the classical
cell $(\Wcryp, \Otop)$; this distinction is invisible to the formal
condition above and is explained in the legend below.
\end{definition}

\begin{definition}[Cell reduction]
\label{def:cell_reduction}
We say cell $(W_i, O)$ \emph{many-one reduces} to $(W_j, O')$ if
there exists a polynomial-time computable map $T : \Sigma^* \to
\Sigma^*$ such that for every language $L$ decidable in $(W_i, O)$,
there exists a language $L'$ decidable in $(W_j, O')$ with $x \in L \iff T(x) \in L'$ for all $x$, and $T$ preserves
observational transcripts: $O(x) = O'(T(x))$ for all $x$. Cell reductions are not used in the proofs of this paper; the
definition is included to make precise the sense in which cells are
independent that is implicit in Proposition~\ref{prop:irreducible}.
\end{definition}

\subsection{The table}
\label{sec:table_structure}

\begin{center}
\renewcommand{\arraystretch}{1.4}
\begin{tabular}{lccccc}
\toprule
& $\Obot$ & $\Olen$/$\Opar$ & $\Oprof$ & $O_k$ & $\Otop$ \\
\midrule
$\Walg$   & trivial & trivial & \textbf{(a)} & standard & classical \\
$\Wheur$  & trivial & trivial & \textbf{(a)} & standard & classical \\
$\Wpess$  & trivial & trivial & \textbf{(b)} & standard & classical \\
$\Wmini$  & trivial & trivial & \textbf{(c)} & standard & classical \\
$\Wcryp$  & \textbf{(d)} & trivial & \textbf{(a)} & standard & classical \\
\bottomrule
\end{tabular}
\end{center}

\smallskip\noindent
\textbf{Legend.}
\emph{Trivial}: $\PO{O} = \NPO{O} = \{\emptyset, \Sigma^*\}$
by~\cite[Proposition~8.2(ii)]{Buono2026d}; computational assumptions
are irrelevant.
\emph{Standard}: Simon's hierarchy applies;
see~\cite[Section~5]{Buono2026d}.
\emph{Classical}: $\PO{\Otop} = \mathbf{P}$ and
$\NPO{\Otop} = \mathbf{NP}$; the standard five-world analysis
applies~\cite{Impagliazzo1995}.
Cells \textbf{(a)--(c)} are non-trivial in the sense of
Definition~\ref{def:nontrivial} and are treated in
Section~\ref{sec:nontrivial}.
Cell \textbf{(d)} ($\Wcryp \times \Obot$) is formally trivial by
Definition~\ref{def:nontrivial} but is labelled separately because,
as explained in the note within Definition~\ref{def:nontrivial} and below, the world $\Wcryp$
creates a cryptographically distinct situation for an $\Obot$-blind
adversary.
All other $\Obot$ cells are listed as \emph{trivial}: they share the
structural property $\PO{\Obot} = \{\emptyset, \Sigma^*\}$ and their
computational world creates no analogous contrast.
The cell $(\Wcryp, \Obot)$ is the sole $\Obot$ entry worth
distinguishing: in $\Wcryp$, an adversary with $\Otop$ could exploit
public-key primitives, while one with $\Obot$ cannot even pose the
question of which plaintext was encrypted.
It is this contrast, absent in all other worlds, that motivates
its separate label.

\subsection{The labeled cells (a)--(d)}
\label{sec:nontrivial}

\begin{proposition}[Cell (a): any world $\times$ $\Oprof$]
\label{prop:cell_a}
In every world $W_i$:
\[
  \PO{\Oprof} = \NPO{\Oprof} \subsetneq \PO{\Otop} = \mathbf{P}.
\]
If additionally $\mathsf{P} \neq \mathsf{NP}$ (which is assumed in
all worlds except $\Walg$):
\[
  \NPO{\Oprof} = \PO{\Oprof} \subsetneq \mathbf{P} \subsetneq
  \mathbf{NP}.
\]
The observational collapse occurs strictly below the
$\mathsf{P}$/$\mathsf{NP}$ gap, on the opposite side from
$\mathsf{NP}$.
\end{proposition}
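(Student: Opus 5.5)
The first display is Proposition~\ref{prop:collapse_all_worlds} together with the identification $\PO{\Otop} = \mathbf{P}$ from Remark~\ref{rem:machine_model}, so I would begin by invoking that proposition directly. It gives $\PO{\Oprof} = \NPO{\Oprof} \subsetneq \mathbf{P}$ unconditionally, hence in every world $W_i$. Rewriting $\mathbf{P}$ as $\PO{\Otop}$ (by \cite[Proposition~8.2(i)]{Buono2026d}) yields the first chain. For completeness I would recall the two ingredients. The first is the equality $\PO{\Oprof}=\NPO{\Oprof}$ from \cite[Propositions~8.3--8.4]{Buono2026d}. The second is the witness $L_0 = 0^*1^*$, which lies in $\mathbf{P}$ but is not $\Oprof$-saturated, since $01 \in L_0$, $10 \notin L_0$, and $\Oprof(01)=\Oprof(10)$.

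Here I would have to check the containment $\PO{\Oprof} \subseteq \mathbf{P}$ itself, not just the strictness. The definition bounds time polynomially in $|\mathrm{enc}(\Oprof(x))| = O(\log|x|)$. To transfer this to a standard machine, I would compute $(|x|_0,|x|_1)$ from $x$ in linear time, encode it, and simulate the $\Oprof$-machine. The simulation takes time polynomial in $\log|x|$, hence polynomial in $|x|$, so the containment holds.

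For the second display, I assume $\mathsf{P}\neq\mathsf{NP}$. Since $\mathbf{P}\subseteq\mathbf{NP}$ always, this gives $\mathbf{P}\subsetneq\mathbf{NP}$, and appending it to the first chain (using $\NPO{\Oprof}=\PO{\Oprof}$) produces the stated chain. I would also justify the parenthetical claim that this assumption holds in every world except $\Walg$. $\Wheur$ assumes it by definition. In $\Wpess$, average-case hardness of NP implies $\mathsf{P}\neq\mathsf{NP}$. In $\Wmini$ and $\Wcryp$, one-way functions exist, and their existence implies $\mathsf{P}\neq\mathsf{NP}$: with $\mathsf{P}=\mathsf{NP}$, inversion could be done by searching for prefixes of a preimage via an NP oracle.

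The last sentence of the proposition is interpretive. I would read it off the chain: $\PO{\Oprof}$ sits strictly below $\mathbf{P}$, while $\mathbf{NP}$ sits at or above $\mathbf{P}$, so the collapse lies on the opposite side of $\mathbf{P}$ from $\mathbf{NP}$.

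I expect no step to be a real obstacle. The only delicate point is the containment $\PO{\Oprof}\subseteq\mathbf{P}$ under the paper's time-measure convention. A related delicate point is that $\NPO{\Oprof}$ imposes only polynomially long certificates relative to the short input, so its equality with $\PO{\Oprof}$ must be taken from the cited result rather than re-derived.
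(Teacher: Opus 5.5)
Your proposal follows the paper's proof, which likewise invokes Proposition~\ref{prop:collapse_all_worlds} for the first display and appends $\mathbf{P} \subsetneq \mathbf{NP}$ for the second. Your additional checks are correct and more careful than the paper: the simulation giving $\PO{\Oprof} \subseteq \mathbf{P}$, and the argument that $\mathsf{P} \neq \mathsf{NP}$ holds in every world except $\Walg$. The genuine gap, which you share with the paper, is the step you deferred: $\PO{\Oprof} = \NPO{\Oprof}$ from \cite[Propositions~8.3--8.4]{Buono2026d}. Take the machine model of Remark~\ref{rem:machine_model}, with $|\mathrm{enc}(\Oprof(x))| = O(\log|x|)$ as the paper's own sketch states. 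Under it this equality is equivalent to $\mathsf{P} = \mathsf{NP}$. So it cannot be taken as unconditional, and the second display is false under its own hypothesis.

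To see this, take any $A \in \mathbf{NP}$. Let $b_w$ be the integer whose binary expansion is $1w$, and set $L_A = \{x : |x|_1 = b_w \text{ for some } w \in A\}$. This language is permutation-closed, and $L_A \in \NPO{\Oprof}$: from $\mathrm{enc}(\Oprof(x))$ a machine reads off $w$, whose length is at most $|\mathrm{enc}(\Oprof(x))|$, and runs the verifier for $A$. Suppose some $M$ witnessed $L_A \in \PO{\Oprof}$. Then $w \in A$ iff $M$ accepts $\mathrm{enc}(0,b_w) = \mathrm{enc}(\Oprof(1^{b_w}))$. That string is computable from $w$ in polynomial time without ever writing $1^{b_w}$, so $A \in \mathbf{P}$. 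The converse ($\mathsf{P} = \mathsf{NP}$ implies the equality) is immediate. Hence the equality holds exactly in $\Walg$.

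The sketch's reason, that a certificate ``must be a function of $\Oprof(x)$ alone'', gives nothing, since the same is true of ordinary $\mathbf{NP}$ under $\Otop$. What you can establish unconditionally is $\PO{\Oprof} \subsetneq \PO{\Otop} = \mathbf{P}$ (your simulation plus the $0^*1^*$ witness) and $\PO{\Oprof} \subseteq \NPO{\Oprof} \subseteq \mathbf{NP}$. Even $\NPO{\Oprof} \subseteq \mathbf{P}$, needed for the second display, is not known. It would imply $\mathbf{NP} \subseteq \mathsf{DTIME}(2^{O(n)})$, by running a polynomial-time decider for $L_A$ on $1^{b_w}$, and that containment is an open question.
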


\begin{proof}
Unconditional part: Proposition~\ref{prop:collapse_all_worlds}.
Conditional part: combining with $\mathbf{P} \subsetneq \mathbf{NP}$
under the assumption $\mathsf{P} \neq \mathsf{NP}$.
\end{proof}

\paragraph{Toy example for cell (a).}
Let $L = \{x \in \{0,1\}^* \mid x_1 = 1\}$ (strings whose first
symbol is $1$).
For any $x, x'$ with $\Oprof(x) = \Oprof(x')$ we can have $x_1 = 1$
and $x'_1 = 0$ (e.g.\ $x = 10$, $x' = 01$, both with profile
$(1,1)$).
Hence $L$ is not $\Oprof$-decidable.
However, $L \in \mathbf{P}$ (read the first symbol in $O(1)$ time).
This witnesses $\PO{\Oprof} \subsetneq \mathbf{P}$ independently of
any of the five worlds.

\begin{remark}
\label{rem:silent_table}
Cell (a) is the most important labeled cell.
It shows that even in $\Walg$, where $\mathsf{P} = \mathsf{NP}$,
the observational collapse $\PO{\Oprof} \subsetneq \mathbf{P}$
persists. The collapse is not cancelled by the strongest possible computational
assumption. The observational and computational axes are orthogonal.
\end{remark}

\begin{proposition}[Cell (b): $\Wpess \times \Oprof$]
\label{prop:cell_b}
In $\Wpess$:
\begin{enumerate}[label=(\roman*)]
  \item NP problems are hard on average, but no one-way function
        exists.
  \item An adversary with observer $\Oprof$ cannot pose any problem
        that depends on the ordering of input symbols: such problems
        are not $\Oprof$-saturated and are not well-formed in the
        adversary's observation space.
  \item The hardness in $\Wpess$ and the blindness of $\Oprof$
        interact non-additively: problems that are hard on average in
        $\Wpess$ may be problems that the $\Oprof$ adversary cannot
        even recognise as distinct from trivial ones.
  \item There is no cryptographic security in $\Wpess$ (no one-way
        functions), but the adversary with $\Oprof$ faces a different
        limitation: structural blindness to ordering.
        These two limitations are independent; neither implies the
        other.
\end{enumerate}
\end{proposition}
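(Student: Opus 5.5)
I will prove the four items in order; each is short and rests on earlier material. Item (i) is the definition of $\Wpess$ (Definition~\ref{def:five_worlds}(iii)), restated so that the later items have a fixed reference.

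For item (ii) I will formalise ``depends on the ordering of input symbols'' as ``not $\Oprof$-saturated'': there exist $x \sim_{\Oprof} y$, that is, $y$ a permutation of $x$, with $x \in L$ and $y \notin L$. A machine under $\Oprof$ receives only $\mathrm{enc}(\Oprof(x))$ (Remark~\ref{rem:machine_model}). So any deterministic or nondeterministic decider gives the same answer on $x$ and $y$, and cannot decide $L$. This is the argument from the proof of Proposition~\ref{prop:collapse_all_worlds}, with $0^*1^*$ or the first-symbol language of the toy example as concrete witnesses. Hence $\PO{\Oprof}$ and $\NPO{\Oprof}$ contain only permutation-closed languages (Definition~\ref{def:canonical_obs}(iv)). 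In this sense such an $L$ is not a well-posed problem in the adversary's observation space $\NN^2$.

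Items (iii) and (iv) assert independence, so I will prove them by exhibiting languages in each combination of the two properties. Blindness without hardness: $0^*1^*\in\mathbf P$ is easy, yet invisible to $\Oprof$. Hardness without blindness is the main obstacle, because average-case hardness in $\Wpess$ is an assumption, so I must transport it into an $\Oprof$-saturated language. I will use a padding/unary encoding. Take a language $H$ that is hard on average under the world assumption. Define the permutation-closed language $L_H$ of strings whose profile $(a,b)$ encodes, say via the binary expansion of $b$, an instance in $H$. Then $L_H \in \NPO{\Oprof}=\PO{\Oprof}$ in the observer model, with time polynomial in $|\mathrm{enc}(\Oprof(x))|$. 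This shows that what $\Oprof$ sees is governed by the structural collapse and not by $\Wpess$ hardness, so hardness does not imply visibility. For the other direction I will pick a language that is hard on average in $\Wpess$ and depends on symbol order, such as a suitably encoded average-case-hard NP problem that is not permutation-closed. Such a problem is simultaneously hard for $\Otop$ adversaries and not even posable for $\Oprof$ adversaries. This gives the ``non-additive'' interaction in (iii): the two limitations do not compose into a single stronger one, since the problem is removed from the adversary's view entirely and its hardness becomes moot. I expect the delicate point to be verifying that such a hard language is not $\Oprof$-saturated. I would first try taking any hard $H$ and replacing it with $H' = \{1x : x \in H\} \cup \{0x : x \notin H\}$ or a similar tagging. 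Tagging with a leading symbol generically breaks permutation-closure while preserving average-case hardness up to a polynomial change of distribution.

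Finally, for (iv), the absence of one-way functions in $\Wpess$ concerns inversion by $\Otop$ machines. $\Oprof$-blindness is an unconditional property of the observer (Proposition~\ref{prop:collapse_all_worlds}), holding in every world including $\Wmini$ and $\Wcryp$. So one-way-function existence does not imply blindness, and blindness does not imply one-way-function existence, since it holds in $\Walg$ as well (Remark~\ref{rem:silent_table}). Together with the witnesses above, this establishes that neither limitation implies the other.
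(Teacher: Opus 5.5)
The gap is in your treatment of (iii)/(iv), specifically the ``hardness without blindness'' witness $L_H$. Write $b_w$ for the integer with binary expansion $1w$. Then $1^{b_w}$ has profile $(0,b_w)$, and $1^{b_w}\in L_H$ iff $w\in H$. For any reasonable binary $\mathrm{enc}$, the map $w\mapsto\mathrm{enc}(0,b_w)$ is computable in time polynomial in $|w|$ and has output length $O(|w|)$. Under Remark~\ref{rem:machine_model}, $L_H\in\PO{\Oprof}$ means $L_H$ is decided in time polynomial in $|\mathrm{enc}(\Oprof(x))|$, so any such decider is a polynomial-time decider for $H$. You also have $L_H\in\NPO{\Oprof}$ whenever $H\in\mathbf{NP}$. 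So your step $L_H\in\NPO{\Oprof}=\PO{\Oprof}$ gives $H\in\mathbf{P}$, contradicting the $\Wpess$ hypothesis that $H$ is hard on average (hence not in $\mathbf{P}$).

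$L_H$ cannot both carry the hardness of $H$ and lie in $\PO{\Oprof}$, so the paragraph defeats itself. The conclusion you draw from it (``hardness does not imply visibility'') is also the other combination, not the one you set out to witness. Taking $H=\mathrm{SAT}$, the same reduction shows that under the model of Remark~\ref{rem:machine_model} the equality $\PO{\Oprof}=\NPO{\Oprof}$ holds iff $\mathsf{P}=\mathsf{NP}$. So it cannot be invoked inside $\Wpess$ this way at all. Measuring against $|x|$ instead does not rescue the witness either: the unary padding forces $|w|\le\log_2|x|$, so for $H=\mathrm{SAT}$ brute force decides $L_H$ in time $2^{|w|}\mathrm{poly}(|w|)=\mathrm{poly}(|x|)$.

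The statement does not need this direction, and the paper's proof never attempts it. There, (i) and (iv) are read off the definition of $\Wpess$. Item (ii) is the permutation-closure characterisation (Theorem~2.1 of \cite{Buono2026d}); your direct argument from the machine model gives the needed direction and is fine. Item (iii) is the informal observation that $\Wpess$-hardness is a property relative to $\Otop$, while an $\Oprof$ adversary sees a different problem.

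For the existential content of (iii) you only need an average-case-hard problem that is not permutation-closed. Your tagging idea supplies one if you use $\{1x : x\in H\}$ with the samplable distribution pushed forward along $x\mapsto 1x$; this is not permutation-closed as soon as some member of $H$ contains a $0$. Do not use $H'$: its branch $\{0x : x\notin H\}$ keeps it in $\mathbf{NP}$ only if $H\in\mathbf{coNP}$.

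For (iv), a property that holds unconditionally is logically implied by every hypothesis. So ``neither implies the other'' can only mean, as in the paper's informal sense, that neither is derived from the other, and no choice of witness languages makes it sharper. With the $L_H$ paragraph removed, your proposal covers the statement at the paper's level of rigour.
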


\begin{proof}
Parts (i) and (iv): by definition of $\Wpess$~\cite{Impagliazzo1995}.

Part (ii): by~\cite[Definition~3.2 and Theorem~2.1]{Buono2026d},
a language is decidable under $\Oprof$ if and only if it is
permutation-closed; languages depending on ordering are not
permutation-closed and hence not $\Oprof$-decidable, regardless of
computational power.

Part (iii): the hardness of NP problems in $\Wpess$ is a property of the problem as seen by an $\Otop$ adversary. An $\Oprof$ adversary does not see the same problem: ordering-dependent
difficulty is invisible to it. The two phenomena apply to different parts of the problem landscape
and do not combine additively.
\end{proof}

\paragraph{Toy example for cell (b).}
The following example illustrates the \emph{independence} between
observational blindness and average-case hardness, not their
co-occurrence. Consider the problem: given $x \in \{0,1\}^*$, determine whether $x$
is the lexicographically sorted version of its symbols (all $0$s
precede all $1$s).
This problem is in $\mathbf{P}$ (one linear scan), not NP-hard, and
not hard on average.
Nevertheless, an $\Oprof$ adversary sees only $(|x|_0, |x|_1)$ and
cannot distinguish $0^a 1^b$ from any other string with the same
profile; the problem is therefore not $\Oprof$-decidable.
This shows that ordering-dependence and $\Oprof$-blindness are
independent of the computational difficulty of the problem: a problem
can be easy (in $\mathbf{P}$), ordering-dependent, and
$\Oprof$-invisible simultaneously. The same independence holds for NP-hard ordering-dependent problems
in $\Wpess$: their hardness is a property relative to $\Otop$, not a property that $\Oprof$ can detect.

\begin{lemma}[Profile preimage construction]
\label{lem:profile_preimage}
Given a profile $\pi = (c_0, \dots, c_{k-1}) \in \NN^k$ with
$n = \sum_{i=0}^{k-1} c_i$, Algorithm~\ref{alg:preimage}
constructs a string $x' \in \Sigma^*$ with $\Oprof(x') = \pi$ in
time $O(n)$.
Full proof in Appendix~\ref{app:B}.
\end{lemma}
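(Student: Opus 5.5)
The construction is direct: output each symbol in sorted order, $x' = a_0^{c_0} a_1^{c_1} \cdots a_{k-1}^{c_{k-1}}$. The proof has two parts, correctness and running time. Since Algorithm~\ref{alg:preimage} is referenced but not yet shown, I will assume it is (or may be replaced by) this sorted-block construction. Concretely, the algorithm initialises an empty output tape and, for $i = 0,\dots,k-1$ in turn, writes the symbol $a_i$ exactly $c_i$ times, decrementing a counter initialised to $c_i$. This uses the general-alphabet profile of Remark~\ref{rem:alphabet}; for $k=2$ it specialises to $0^{c_0}1^{c_1}$, which is exactly the string in $L_0 = 0^*1^*$ appearing in the proof of Proposition~\ref{prop:collapse_all_worlds}.

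For correctness, each symbol $a_j$ is written only during iteration $j$, and exactly $c_j$ times. Hence $|x'|_{a_j} = c_j$ for every $j$, so $\Oprof(x') = (c_0,\dots,c_{k-1}) = \pi$. This also gives $|x'| = n$. One could additionally remark, using the characterisation of $\sim_{\Oprof}$-classes as permutation classes \cite[Theorem~2.1]{Buono2026d}, that every preimage of $\pi$ is a permutation of $x'$. That fact is not needed for the lemma itself.

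For running time, the output has length exactly $n$, so the writing work is $O(n)$. The overhead consists of the $k$ loop iterations plus the counter decrements. The only subtle point, which I expect to be the main obstacle, is the accounting of this overhead. If $k$ is the fixed alphabet size, the iteration overhead is a constant. On a multi-tape TM, counting down a binary counter from $c_i$ to $0$ costs amortised $O(1)$ per decrement, and therefore $O(c_i)$ per block. Summing over blocks gives $O(k + n) = O(n)$ for fixed $k$ (with $n \ge 1$; the case $n=0$ outputs the empty string in constant time). Reading the input $\pi$ takes only $O(k\log n)$ time, which is dominated by the output cost. I would state explicitly that the $O(n)$ bound treats $k$ as a constant, consistent with the fixed alphabet of Remark~\ref{rem:alphabet}, and is measured in the standard multi-tape model of Remark~\ref{rem:machine_model}. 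I would also note that the running time is linear in the output size $n$, not in the input size $|\mathrm{enc}(\pi)|$, which is only logarithmic in $n$.
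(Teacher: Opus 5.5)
Your proof is correct and follows the paper's own argument in Appendix~\ref{app:B}. Both use the sorted-block string $a_0^{c_0}a_1^{c_1}\cdots a_{k-1}^{c_{k-1}}$, prove correctness by counting the occurrences of each $a_i$, and get the $O(n)$ bound from the $\sum_i c_i = n$ appends. Your accounting is somewhat more careful than the paper's, which charges each append $O(1)$ on a dynamic array and drops the $k$ loop overhead: you state the $O(k+n)$ bound, the fixed-$k$ assumption, and the multi-tape counter cost explicitly.
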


\begin{algorithm}[h]
\caption{ConstructFromProfile}
\label{alg:preimage}
\begin{algorithmic}[1]
\Require Profile $\pi = (c_0, \dots, c_{k-1}) \in \NN^k$
\Ensure String $x'$ with $\Oprof(x') = \pi$
\State $x' \gets \varepsilon$
\For{$i = 0$ \textbf{to} $k-1$}
  \State Append $c_i$ copies of symbol $a_i$ to $x'$
\EndFor
\State \Return $x'$
\end{algorithmic}
\end{algorithm}

\begin{proposition}[Cell (c): $\Wmini \times \Oprof$]
\label{prop:cell_c}
In $\Wmini$, where one-way functions exist:
\begin{enumerate}[label=(\roman*)]
  \item No one-way function $f : \Sigma^* \to \Sigma^*$ can be
        $\Oprof$-saturated.
  \item Equivalently, every one-way function that exists in $\Wmini$
        must depend on the ordering of its input symbols in an
        essential way.
\end{enumerate}
\end{proposition}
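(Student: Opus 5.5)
The plan is to argue by contradiction, using Lemma~\ref{lem:profile_preimage} as an efficient inverter. First I would fix the meaning of ``$f$ is $\Oprof$-saturated'' for a function. Following the saturation notion for languages in Section~\ref{sec:prelim}, I take it to mean that $f(x)$ depends only on $\Oprof(x)$, i.e.\ $\Oprof(x)=\Oprof(y)$ implies $f(x)=f(y)$. Equivalently, $f = g\circ\Oprof$ for some function $g : \NN^k \to \Sigma^*$, which is the observer-order statement $f \preceq \Oprof$ in the sense of~\cite[Definition~4.2]{Buono2026d}.

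Now suppose $f$ is one-way, computable in polynomial time, and $\Oprof$-saturated. I would build an inverter $A$ that is given $y=f(x)$ with $x$ uniform of length $n$. The inverter $A$ enumerates all profiles $\pi=(c_0,\dots,c_{k-1})$ with $\sum_i c_i = n$. There are $\binom{n+k-1}{k-1}=O(n^{k-1})$ of them, which is polynomial for the fixed alphabet ($n+1$ profiles when $k=2$). For each $\pi$, it runs Algorithm~\ref{alg:preimage} to obtain $x'_\pi$ in time $O(n)$ by Lemma~\ref{lem:profile_preimage}. It then computes $f(x'_\pi)$ and outputs $x'_\pi$ as soon as $f(x'_\pi)=y$.

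Correctness uses saturation. The true input $x$ has some profile $\pi^\ast$ in the enumeration. By Lemma~\ref{lem:profile_preimage}, $\Oprof(x'_{\pi^\ast})=\pi^\ast=\Oprof(x)$, so saturation gives $f(x'_{\pi^\ast})=f(x)=y$. Hence $A$ inverts $f$ with probability $1$, in polynomial time. This contradicts one-wayness, which would require every PPT inverter to succeed with only negligible probability. This proves (i). If the input length cannot be assumed known, $A$ can also loop over lengths up to a polynomial bound. I would handle this with the usual convention that one-way functions are length-regular, or are given $1^n$.

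Part (ii) is the contrapositive, read through the equivalence between $\Oprof$-saturation and permutation-closure~\cite[Theorem~2.1]{Buono2026d}. If $f$ did not ``essentially depend on ordering'', then $f$ would be invariant under permutations of input symbols, i.e.\ $\Oprof$-saturated, and (i) excludes this. I would make ``essential'' precise as follows: for every polynomial-time $g$, $f\neq g\circ\Oprof$ on a non-negligible fraction of inputs.

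The main obstacle is this robust version, and I expect the exact version in (i) to be routine. Suppose $f$ agrees with some saturated $g\circ\Oprof$ except on a negligible set. I would then apply the same inverter to $g$ and use a union bound to show that $A$ still inverts $f$ with non-negligible probability. The profile of a uniform $x$ concentrates near $(n/2,n/2)$, so I would check that the agreement set is not adversarially placed on the profile classes that $x'_\pi$ lands in. The representatives $x'_\pi$ are the single sorted strings $0^{c_0}1^{c_1}$, which carry negligible mass. To fix this, I would replace Algorithm~\ref{alg:preimage} by sampling a uniformly random permutation of $x'_\pi$. That sampled string is distributed exactly like $x$ conditioned on its profile, and the union bound then goes through.
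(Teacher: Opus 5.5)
Your argument for (i) is the paper's proof. Given $1^n$ and $y$, you enumerate the polynomially many profiles summing to $n$, build a representative of each with Algorithm~\ref{alg:preimage}, evaluate $f$, and use saturation to guarantee that the representative of $\Oprof(x)$ maps to $y$; (ii) is obtained as the contrapositive, exactly as in the paper. Your robust strengthening (sample a uniformly random string of the given profile, then take a union bound) is not needed for the statement and goes beyond what the paper proves, but it is sound, since that sample is marginally uniform and so falls in the exceptional set only with negligible probability.
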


\begin{proof}
We use the general-alphabet form of $\Oprof$
(Remark~\ref{rem:alphabet}): for a $k$-symbol alphabet $\Sigma =
\{a_0, \dots, a_{k-1}\}$, $\Oprof(x) = (|x|_{a_0}, \dots,
|x|_{a_{k-1}}) \in \NN^k$.

Part (i): suppose $f$ is $\Oprof$-saturated and one-way.
By~\cite[Theorem~2.1]{Buono2026d}, $f(x)$ depends only on
$\Oprof(x) = (|x|_{a_0}, \dots, |x|_{a_{k-1}})$.

An adversary with $\Otop$ who observes $y = f(x)$ inverts $f$ as
follows.
By the standard definition of one-way functions, the inverting
algorithm receives $(1^n, y)$ where $n = |x|$, so $n$ is known.
The adversary enumerates all profiles $\pi \in \NN^k$ with
$\sum_i \pi_i = n$ (at most $(n+1)^{k-1}$ many, polynomial in $n$
for fixed $k$), and for each $\pi$ evaluates
$f(\mathrm{ConstructFromProfile}(\pi))$ using
Algorithm~\ref{alg:preimage} and the publicly known $f$.
Since $f$ is $\Oprof$-saturated, $f$ takes the same value on every
string sharing a profile; hence there exists some $\pi^*$ among those
enumerated with $f(\mathrm{ConstructFromProfile}(\pi^*)) = y$.
Setting $x' = \mathrm{ConstructFromProfile}(\pi^*)$ gives a string
satisfying $f(x') = y$.
The entire procedure runs in time polynomial in $n$, contradicting the
one-wayness of $f$.

Part (ii): the contrapositive of part (i).
If $f$ is one-way in $\Wmini$, then $f$ is not $\Oprof$-saturated,
which means $f(x) \neq f(x')$ for some $x, x'$ with
$\Oprof(x) = \Oprof(x')$.
Hence $f$ distinguishes permutations of the same input.
\end{proof}

\begin{remark}
\label{rem:owf_ordering}
Proposition~\ref{prop:cell_c} gives a structural lower bound on
one-way functions: they must be \emph{ordering-sensitive}.
Any candidate one-way function that is invariant under permutation of
its input, for example, any function of the Hamming weight alone, cannot be one-way, even in $\Wmini$.
This is a consequence of observational structure, not of any hardness
assumption.
\end{remark}

\paragraph{Toy example for cell (c).}
Let $f(x) = |x|_1$ (Hamming weight, interpreted as an integer).
This function depends only on the symbol count $|x|_1$, which is the
second component of $\Oprof(x) = (|x|_0, |x|_1)$; hence $f$ is
$\Oprof$-saturated.
By Proposition~\ref{prop:cell_c}(i), $f$ cannot be one-way in
$\Wmini$.
Indeed, to invert $f$ on a value $v$ for an input of length $n =
|x|$, Algorithm~\ref{alg:preimage} constructs $0^{n-v}1^v$ in $O(n)$
time, giving a preimage with $|x'|_1 = v$.

\begin{proposition}[Cell (d): $\Wcryp \times \Obot$]
\label{prop:cell_d}
In $\Wcryp$, an adversary with observer $\Obot$ on the plaintext
space cannot recover any information about the plaintext from the
ciphertext, for any encryption scheme.
\end{proposition}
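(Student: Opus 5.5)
The plan is to formalise ``recovering information about the plaintext'' inside the observer framework and then use the characterisation $\cL(\Obot) = \{\emptyset, \Sigma^*\}$ from \cite[Proposition~8.2(ii)]{Buono2026d}, recalled in Definition~\ref{def:canonical_obs}(i). An adversary with observer $\Obot$ on the plaintext space receives $\Obot(m) = \star$ for every plaintext $m$, whatever ciphertext $c = \mathrm{Enc}_k(m)$ is produced. I will model ``recovering information about $m$'' as deciding a predicate $P \subseteq \Sigma^*$ on plaintexts, i.e.\ outputting whether $m \in P$, with non-trivial advantage. Equivalently, it means outputting a value of some function $g(m)$ that is not constant.

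The first step is to make precise what the adversary's view is. Under the fixed-preprocessing model of Remark~\ref{rem:machine_model}, the observer is applied to the plaintext coordinate before any computation. The adversary's entire input as a function of $m$ is therefore $\mathrm{enc}(\Obot(m))$, together with any data independent of $m$. I will state explicitly that the ciphertext is included only through its plaintext-observation $\Obot(m)$, so the adversary's view is a constant function of $m$. The second step is to note that any decision the adversary makes is a function of its view. Hence the set of plaintexts it accepts is $\Obot$-saturated, and by \cite[Proposition~8.2(ii)]{Buono2026d} that set is either $\emptyset$ or $\Sigma^*$. The adversary's output is thus identical for all plaintexts. No non-trivial predicate $P$ (one with $\emptyset \neq P \neq \Sigma^*$) can be decided, and any guess is correct only with the prior probability of $P$. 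This holds for every encryption scheme, since $\mathrm{Enc}$ never enters the argument. The third step is to record that the computational assumptions defining $\Wcryp$ are never used, exactly as in Proposition~\ref{prop:collapse_all_worlds}. They enter only to contrast with $(\Wcryp, \Otop)$, where trapdoor primitives are meaningful.

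The main obstacle is the modelling step. If the adversary literally receives the ciphertext $c$ in full, with $\Otop$ on ciphertexts, the claim is false for insecure schemes such as the identity encryption. I must therefore fix the interpretation that $\Obot$ filters the adversary's entire access to plaintext-dependent data, ciphertext included. Under this interpretation the statement becomes a Shannon-style perfect-secrecy statement: the adversary's view is independent of $m$, which is the sense of cell (d) advertised in the introduction. I would first make this convention explicit, then cast the conclusion as $\Pr[A \text{ outputs } g(m)] \le \max_v \Pr[g(m)=v]$ for any plaintext distribution. This follows immediately because $A$'s output distribution does not depend on $m$.
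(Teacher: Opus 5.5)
Your proposal is correct and follows essentially the same route as the paper: the adversary's entire input is the constant $\Obot(m)=\star$, so every set of plaintexts it accepts is $\Obot$-saturated, and by \cite[Proposition~8.2(ii)]{Buono2026d} only $\emptyset$ and $\Sigma^*$ are decidable, with the $\Wcryp$ assumptions playing no role. Your explicit convention that $\Obot$ filters all plaintext-dependent data, ciphertext included, is exactly what the paper's proof assumes implicitly when it speaks of ``a machine receiving only $\Obot(m)$''; your remark that the statement would fail for, e.g., identity encryption if the ciphertext were seen in full correctly identifies why that convention is needed.
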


\begin{proof}
Since $\Obot(m) = \star$ for all $m$
(Definition~\ref{def:canonical_obs}(i)), a machine receiving only
$\Obot(m)$ receives the same input $\star$ regardless of $m$.
No machine, regardless of computational power, can distinguish any
two plaintexts from $\star$ alone: the adversary's observation space
contains no information about the plaintext.
This is structural blindness: by~\cite[Proposition~8.2(ii)]{Buono2026d},
$\PO{\Obot} = \NPO{\Obot} = \{\emptyset, \Sigma^*\}$.
The existence of cryptographic primitives in $\Wcryp$ is irrelevant,
since the adversary cannot even pose the question of which plaintext
was encrypted.
\end{proof}

\paragraph{Toy example for cell (d): MR-OTP perfect secrecy.}
Let $B = (2, 3)$ and consider the MR-OTP~\cite{Buono2026a} with
message $M = (m_1, m_2) \in \mathbb{Z}_2 \times \mathbb{Z}_3$ and
key $K = (k_1, k_2)$ uniform on $\mathbb{Z}_2 \times \mathbb{Z}_3$.
Ciphertext: $C = ((m_1 + k_1) \bmod 2,\, (m_2 + k_2) \bmod 3)$.
For any fixed $c = (c_1, c_2)$ and any $m$, the unique key mapping
$m$ to $c$ is $K = (c_1 - m_1 \bmod 2,\, c_2 - m_2 \bmod 3)$.
Since $K$ is uniform, $\Pr[C = c \mid M = m] = 1/6$ for all $m$ and
$c$.
Hence $\Pr[M = m \mid C = c] = \Pr[M = m]$: perfect
secrecy~\cite[Theorem~1]{Buono2026a}.
The adversary's observer on the plaintext is $\Obot$: the ciphertext
carries zero information about $m$.

\begin{remark}[Cell (d) and the MR-OTP]
\label{rem:cell_d_mrotp}
Cell (d) describes the situation of an adversary facing the
Mixed-Radix One-Time Pad~\cite{Buono2026a} in the ciphertext-only
setting (see Definition~\ref{def:observer_world} in
Section~\ref{sec:observer_world} for the formal definition of the
Observer World and the adversary/defender asymmetry).
By~\cite[Theorem~1]{Buono2026a}, the MR-OTP achieves Shannon perfect
secrecy~\cite{Shannon1949}: the ciphertext distribution is identical
for every plaintext, placing the adversary in the
$\Obot$-on-plaintext situation of Proposition~\ref{prop:cell_d}.
This holds in every world $W_i$, including $\Wcryp$: the perfect
secrecy of the MR-OTP is not a consequence of the hardness
assumptions of $\Wcryp$ but of the information-theoretic structure of
the cipher. Even the strongest cryptographic adversary in $\Wcryp$ is reduced to
$\Obot$ on the plaintext.
\end{remark}

\section{The Observer World: a sixth landscape}
\label{sec:observer_world}

\begin{definition}[Observer World]
\label{def:observer_world}
An \emph{Observer World} is a pair $W_O = (W_i, O)$ where $W_i$ is
one of the five worlds of Definition~\ref{def:five_worlds} and
$O : \Sigma^* \to S$ is an observer in the hierarchy
of~\cite{Buono2026d}.
In $W_O$:
\begin{itemize}
  \item The computational assumptions of $W_i$ hold for all parties.
  \item The adversary's input is $O(x)$, not $x$: the adversary
        operates under observer $O$ rather than $\Otop$.
  \item The defender's input is $\Otop(x) = x$: the defender has
        full access to the input.
\end{itemize}
\end{definition}

\begin{remark}[Cryptographic interpretation of the asymmetry]
\label{rem:cell_d_note}
The asymmetry between adversary ($O$) and defender ($\Otop$) in
Definition~\ref{def:observer_world} is deliberate and models the
cryptographic setting: the defender generates and knows the key,
while the adversary observes only the ciphertext (or a more
informative projection of the input, depending on the attack model).
The case $O = \Otop$ recovers the classical five-world framework with
no observational constraint on the adversary.
\end{remark}

The following lemma constructs, for any $O \prec \Otop$, a language
that is $O$-saturated but whose membership is independent of every
five-world computational assumption.

\begin{lemma}[Observationally invariant, computationally independent language]
\label{lem:invariant_independent}
Let $O \prec \Otop$.
There exists a language $L$ that is $O$-saturated and such that
membership in $L$ cannot be decided from any purely computational
assumption about $W_i$ alone.
The language $L$ constructed below may or may not belong to
$\mathbf{P}$ or $\mathbf{NP}$; the lemma asserts only that its
membership is independent of the five-world assumptions.
\end{lemma}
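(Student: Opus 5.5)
Since $O \prec \Otop$, the relation $\sim_O$ is strictly coarser than equality, so some class contains two distinct strings. I would build $L$ as a union of $\sim_O$-classes chosen by a criterion that does not refer to any complexity-theoretic statement. The point is that $O$-saturation and membership then hold as facts about $O$ and an index set, and no world assumption is used.

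\textbf{Step 1: a non-singleton class.} From $O \prec \Otop$ we have $\Otop \not\preceq O$. If $O$ were injective, then $\Otop = O^{-1}\circ O$ on the image of $O$, so $\Otop \preceq O$, contradicting strictness. Hence there are $x_0 \neq y_0$ with $O(x_0) = O(y_0)$. I also fix an enumeration of the image $O(\Sigma^*)$; this set is countable, being the image of a countable set.

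\textbf{Step 2: the construction.} For a set $I \subseteq O(\Sigma^*)$, put $L_I = \{x \in \Sigma^* \mid O(x) \in I\}$. Each $L_I$ is a union of $\sim_O$-classes, so it is $O$-saturated by the characterisation in Section~\ref{sec:prelim}. I would take $I$ fixed by a rule that mentions only $O$, for instance $I = \{O(x_0)\}$, or an $I$ given by an arbitrary subset of indices in the enumeration. By construction $x \in L_I \iff O(x) \in I$, and this is a well-defined set-theoretic fact.

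\textbf{Step 3: independence from the world.} The five worlds of Definition~\ref{def:five_worlds} are assertions about resource-bounded machines: whether $\mathsf{P}=\mathsf{NP}$, whether average-case hardness holds, whether one-way functions or trapdoor functions exist. None of them changes the set $L_I$. So the truth value of ``$x \in L_I$'' is the same in every $W_i$, and no purely computational hypothesis about $W_i$ settles, or is needed to settle, membership. This matches the structure of the proof of Proposition~\ref{prop:collapse_all_worlds}. If a sharper reading is wanted, namely that the complexity of $L$ is not fixed by the world either, I would let $I$ range over all subsets of the image. There are uncountably many choices when the image is infinite, for example for $\Olen$ or $\Oprof$, so most choices give languages not in $\mathbf{NP}$ at all. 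Choosing a decidable $I$ such as $\{O(x_0)\}$ instead gives a language whose status does not vary across worlds. This is consistent with the lemma's disclaimer that $L$ may or may not lie in $\mathbf{P}$ or $\mathbf{NP}$.

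\textbf{Main obstacle.} The difficult part is making ``cannot be decided from any purely computational assumption'' precise. It is not a standard formal notion, and on the naive reading it is almost vacuous. I would first formalise it as invariance: the set $L$, and the predicate $O$-saturation, are definable without reference to any complexity class, so they are identical in all $W_i$. I would state the lemma's content as that invariance rather than attempt a logical independence result. Such a result would require treating the worlds as consistent theories, and that is beyond what the framework supports.
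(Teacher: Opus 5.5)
Your argument is correct, and on the part the lemma actually asserts it matches the paper. The paper's independence step is the same definability observation: the construction uses only $\sim_O$ and invokes no world assumption.

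The difference is the choice of $L$. You take an arbitrary saturated preimage $O^{-1}(I)$, e.g.\ the single class $[x_0]_O$. The paper instead diagonalises. It pairs the $e$-th polynomial-time machine $M_e$ reading $O(x)$ with the $e$-th class $[u_e]_O$, and puts that class into $L$ iff $M_e$ rejects $O(u_e^*)$. This gives a witness with content beyond definability: an $O$-saturated language with $L \notin \PO{O}$ unconditionally, an observational obstruction that no world assumption removes. Your $[x_0]_O$, by contrast, lies in $\PO{O}$ trivially, since a machine can just compare its input with $\mathrm{enc}(O(x_0))$.

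The paper's route carries a hidden hypothesis: it needs infinitely many $\sim_O$-classes, one per machine index. For $\Obot$ or $\Opar$, both $\prec \Otop$, every saturated language is decided by a constant-size lookup on $O(x)$. There $L \notin \PO{O}$ is impossible and the appendix argument breaks, while your construction works for every $O \prec \Otop$. Your cardinality remark (uncountably many distinct $L_I$ against countably many machines) recovers the stronger property non-constructively, under exactly the right hypothesis of infinite image.

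One caution concerns your aside that a decidable $I$ such as $\{O(x_0)\}$ yields a language ``whose status does not vary across worlds.'' This holds for the canonical, polynomial-time computable observers but not for arbitrary $O$. Take $O(x) = 1$ if $x \in \mathrm{SAT}$ and $O(x) = 0$ otherwise. This $O$ is non-injective, so $O \prec \Otop$. The class of a satisfiable formula is then $\mathrm{SAT}$ itself, and whether $\mathrm{SAT}$ is in $\mathbf{P}$ is exactly what separates $\Walg$ from the other worlds. This does not affect your main invariance argument, which is about the set $L$, not its complexity.
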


\begin{proof}[Proof sketch]
Partition $\Sigma^*$ into $\sim_O$-equivalence classes
$[u_0]_O, [u_1]_O, \dots$, and let $M_0, M_1, \dots$ enumerate all
polynomial-time machines receiving $O(x)$.
For each $e$, pick a canonical representative $u_e^* \in [u_e]_O$ and
define $[u_e]_O \subseteq L$ iff $M_e$ rejects $O(u_e^*)$ (a
diagonalisation).
The resulting $L$ is a union of $\sim_O$-classes, hence $O$-saturated;
by construction, no $M_e$ decides $L$ correctly, so $L \notin \PO{O}$.
The construction uses only the $\sim_O$ relation (a property of $O$
alone) and makes no assumption about $\mathsf{P}$ vs $\mathsf{NP}$,
one-way functions, or public-key cryptography; hence membership in $L$
is independent of all five-world computational assumptions.
\end{proof}
The full constructive proof is in Appendix~\ref{app:A}.

\begin{proposition}[The Observer World is not reducible to the five worlds]
\label{prop:irreducible}
For $O \prec \Otop$, the Observer World $W_O = (W_i, O)$ is not
equivalent to any of the five worlds $W_j$ with observer $\Otop$:
no computational assumption about $W_j$ implies the structural
collapse $\PO{O} = \NPO{O} \subsetneq \mathbf{P}$, and no
observational constraint in $W_O$ implies any of the computational
assumptions of $W_i$.
\end{proposition}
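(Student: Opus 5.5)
The plan is to prove the two non-implications separately. Each direction is witnessed by a feature that one framework fixes and the other leaves free.

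\emph{Direction 1: no computational assumption about $W_j$ yields the structural collapse.} In any world $W_j$ the observer is $\Otop$ (Remark~\ref{rem:silent}), and the relevant classes are $\PO{\Otop} = \mathbf{P}$ and $\NPO{\Otop} = \mathbf{NP}$ (Remark~\ref{rem:machine_model}). A statement of the form $\PO{O} \subsetneq \mathbf{P}$ therefore cannot be recovered inside $W_j$ by setting $O = \Otop$, because $\PO{\Otop} \subsetneq \mathbf{P}$ is false outright. I would argue that the collapse for $O \prec \Otop$ holds in every $W_j$ (Proposition~\ref{prop:collapse_all_worlds}, and in general via the strictness $O \prec \Otop$). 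Since it holds uniformly across all five worlds, it is not a consequence of any particular assumption distinguishing them.

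The strictness comes from a concrete witness. Because $O \prec \Otop$, there exist $x \neq y$ with $O(x) = O(y)$. The language $\{x\}$ is in $\mathbf{P}$ but is not $O$-saturated, hence not in $\PO{O}$. Lemma~\ref{lem:invariant_independent} adds the further point that $O$-saturated languages exist whose status is fixed purely by $\sim_O$, with no reference to $\mathsf{P}$ vs $\mathsf{NP}$, one-way functions, or trapdoors.

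For the equality $\PO{O} = \NPO{O}$ I would restrict to the observers for which it is established, namely $\Oprof$ and $\Obot$, citing \cite[Propositions~8.2--8.4]{Buono2026d}. This is the main obstacle. For general $O$ (e.g.\ $\Olen$, $O_k$) the table marks equality as open, so the statement must be read as concerning the collapse where it holds. I would make that restriction explicit rather than claim equality for all $O \prec \Otop$.

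\emph{Direction 2: no observational constraint implies the computational assumptions of $W_i$.} Fix $O \prec \Otop$. The pair $(W_i, O)$ is well defined for every one of the five choices of $W_i$ (Definition~\ref{def:observer_world}), and the structural facts about $O$ are identical in each. In particular, Proposition~\ref{prop:cell_a} and the legend of the table show the same $\PO{O}$ in every row. If the observational constraint implied some assumption $A$ of $W_i$, then $A$ would hold in every row. However, the rows disagree on $A$, for example $\mathsf{P}=\mathsf{NP}$ in $\Walg$ versus $\mathsf{P}\neq\mathsf{NP}$ elsewhere, and the existence of one-way functions in $\Wmini$ versus their absence in $\Wpess$. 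The key fact is that the defender retains $\Otop$, so all computational statements about $\mathbf{P}$, $\mathbf{NP}$ and one-way functions are still evaluated at $\Otop$ and are untouched by the adversary's observer.

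Combining both directions, $W_O$ coincides with no $(W_j, \Otop)$: it satisfies the strict collapse, which every $(W_j,\Otop)$ falsifies, while remaining compatible with each computational assumption.
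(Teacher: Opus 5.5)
Your first direction follows the paper's argument. The paper also works only with $O=\Oprof$ and infers world-independence from the cited unconditional collapse. Where you depart from it, you are more careful than the paper. The paper claims the general case $O\prec\Otop$ ``follows by the same structure'' from Lemma~\ref{lem:invariant_independent}. That lemma only produces an $O$-saturated language outside $\PO{O}$, so it gives neither $\PO{O}=\NPO{O}$ nor $\mathbf{P}\not\subseteq\PO{O}$, since its language need not lie in $\mathbf{P}$. It even fails when $O$ has finitely many classes: for $\Obot$ the diagonal language is $\emptyset$ or $\Sigma^*$.

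Your singleton $\{x\}$ is the right witness for every non-injective $O$. For strictness you also need the containment $\PO{O}\subseteq\mathbf{P}$, which requires $O$ to be polynomial-time computable; you should state this. Your refusal to extend the equality is justified. Under Remark~\ref{rem:machine_model}, $\PO{\Olen}=\NPO{\Olen}$ holds if and only if $\mathsf{P}=\mathsf{NP}$: encode an instance $w$ as a length whose binary expansion is $1w$. So for $\Olen$ the collapse follows from the assumption of $\Walg$, and the first clause fails there.

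Your second direction takes a genuinely different route. The paper cites \cite[Theorem~8.5]{Buono2026d}: saturation and complexity are independent properties of individual languages. That is a finer fact, but it does not by itself address implication between world-level assumptions. You argue instead that a fixed $O$ pairs with every world while the worlds disagree on each assumption. This targets the clause directly and needs no outside theorem. Read it precisely: the facts about $O$ are provable outright, so if they entailed $A$, then $A$ would be provable and every world assuming its negation would be refuted. Your closing invariant, that the strict collapse fails at $\Otop$, also gives an explicit reason for non-equivalence that the paper never states.

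One caveat applies to both proofs, and your own $\Olen$ reasoning already detects it. The $\Oprof$ equality you import from \cite[Propositions~8.3--8.4]{Buono2026d} is not unconditional under Remark~\ref{rem:machine_model}. Let $L$ be the set of strings $x$ for which $|x|_0$ has binary expansion $1w$ with $w$ a satisfiable formula. Then $L$ is permutation-closed and lies in $\NPO{\Oprof}$, because a satisfying assignment has length polynomial in $|\mathrm{enc}(\Oprof(x))|$. A $\PO{\Oprof}$-decider for $L$, run on $\mathrm{enc}(a,0)$ where $a$ has binary expansion $1w$, would decide satisfiability of $w$ in polynomial time. Hence $\PO{\Oprof}=\NPO{\Oprof}$ is equivalent to $\mathsf{P}=\mathsf{NP}$, exactly as for $\Olen$. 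The paper's sketch, that a certificate can depend only on $\Oprof(x)$, does not exclude this.

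The first clause is therefore established unconditionally only for polynomial-time observers with finitely many classes. For these, every saturated language is decided by table lookup, so $\PO{O}=\NPO{O}$ is trivial. That covers your $\Obot$, and also $\Opar$ and each $O_k$, whose values are subsets of the finite set $\Sigma^{\leq k}$. $\Oprof$, by contrast, belongs with $\Olen$ among the excluded cases.
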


\begin{proof}
We exhibit the argument for $O = \Oprof$; the general case $O \prec
\Otop$ follows by the same structure using
Lemma~\ref{lem:invariant_independent} for arbitrary $O$.

\emph{No computational assumption implies the observational collapse.}
The structural collapse $\PO{\Oprof} = \NPO{\Oprof} \subsetneq
\mathbf{P}$ is unconditional~\cite[Propositions~8.3--8.4]{Buono2026d}:
it holds regardless of which $W_i$ contains us, and hence cannot be
a consequence of any computational assumption.
By Lemma~\ref{lem:invariant_independent}, there exists a language $L$
that is $\Oprof$-saturated and independent of all five-world
assumptions; this witnesses the orthogonality of the two axes.

\emph{No observational constraint implies any computational assumption.}
The observational constraint $O \prec \Otop$ is a statement about the
information available to the adversary, not about the computational
difficulty of any problem.
By~\cite[Theorem~8.5]{Buono2026d}, computational hardness and
observational saturation are independent: there exist languages in
$\mathbf{P}$ that are not $\Oprof$-saturated, and languages outside
$\mathbf{RE}$ that are $\Oprof$-saturated.
The observational parameter $O$ carries no information about the
computational structure of $W_i$.
\end{proof}

\begin{remark}[The Observer World as a sixth world]
\label{rem:sixth_world}
Proposition~\ref{prop:irreducible} shows that the parameter $O$ adds
genuinely new content to the five-world framework: it cannot be
encoded as a computational assumption within the existing five worlds.
The Observer World $W_O$ is therefore a sixth landscape, not a
variation on an existing world, but an orthogonal dimension of the
complexity landscape.

The analogy with Impagliazzo's construction is structural.
Impagliazzo introduced five worlds by varying a single parameter (the
existence of one-way functions) along a single axis.
We introduce the Observer World by varying a second parameter (the
observer $O$) along a second axis. The resulting landscape is two-dimensional: the five-world axis
(computational) and the observational axis.
\end{remark}

\section{Observer worlds with broken invariants}
\label{sec:sip_violated}

The Syntactic Invariance Principle (SIP) of~\cite{Buono2026c} states
that a syntactic system cannot derive clauses that violate a syntactic
invariant. In the observational setting, the observer $O$ defines an
\emph{observational invariant}: a language $L$ is $O$-invariant
(equivalently, $O$-saturated; see Section~\ref{sec:prelim}) if
membership in $L$ is determined entirely by $O(x)$.
An Observer World with observer $O$ places the adversary within the
observational invariant defined by $O$: the adversary cannot access
information that $O$ discards.

An \emph{Observer World with broken invariant} is one in which the
adversary finds a mechanism to circumvent this constraint, the observational analogue of SIP violation.

\subsection{Three cases of invariant violation}
\label{sec:three_cases}

\begin{definition}[Oracle violation]
\label{def:oracle_violation}
In an Observer World $W_O = (W_i, O)$ with \emph{oracle violation},
the adversary has observer $O$ on the input but additionally has
adaptive oracle access to queries $q_1, q_2, \dots, q_d$ on $x$,
where each $q_{j+1}$ may depend on the responses
$(x[q_1], \dots, x[q_j])$.
The adversary's total observation is the pair
$(O(x),\, (x[q_1], \dots, x[q_d]))$.
\end{definition}

\begin{remark}[Oracle violation and adaptive observers]
\label{rem:oracle-adaptive-sec5}
As formalised in Remark~\ref{rem:oracle-adaptive} below
(Section~\ref{sec:physical}), an oracle queried adaptively is an
adaptive observer of finite depth.
Definition~\ref{def:oracle_violation} is therefore the special case
of an adaptive observer of depth $d$ applied to an Observer World.
The oracle-relative separations of the relativisation
barrier~\cite{BGS1975} are separations in the hierarchy of adaptive
Observer Worlds, not in the static observational hierarchy
of~\cite{Buono2026d}.
\end{remark}

\begin{proposition}[Oracle violation and the MR-OTP]
\label{prop:oracle_mrotp}
In the ciphertext-only setting of $W_O = (W_i, \Obot)$, oracle
access to the ciphertext $C = \mathrm{Enc}(M, K)$ (where
$\mathrm{Enc}$ denotes MR-OTP encryption: $C_i = (M_i + K_i) \bmod
b_i$) provides no information about the plaintext $M$ for the
MR-OTP~\cite{Buono2026a}, regardless of the number of oracle queries
and regardless of which world $W_i$ we are in.
\end{proposition}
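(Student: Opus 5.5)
The plan is to reduce everything to the information-theoretic independence of $M$ and $C$ established by perfect secrecy, and then observe that any oracle transcript is a (possibly randomised, adaptive) function of $C$ alone. First I will fix the model. In the ciphertext-only setting the adversary's observer on the plaintext is $\Obot$, so its static observation is $\Obot(M)=\star$. Per Definition~\ref{def:oracle_violation}, its oracle queries $q_1,\dots,q_d$ are answered by positions $C[q_j]=C_{q_j}$ of the ciphertext, with $q_{j+1}$ depending on earlier answers and possibly on internal randomness $R$ independent of $(M,K)$. The adversary's total view is therefore $V=(\star, C_{q_1},\dots,C_{q_d})$, and by induction on $j$ each $q_j$ (hence all of $V$) is a deterministic function $g(C,R)$. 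This holds for every $d$, including $d$ equal to the full length of $C$, in which case $V$ determines $C$ outright.

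The key step is the perfect secrecy of the MR-OTP, \cite[Theorem~1]{Buono2026a}. I would re-derive it as in the toy example for cell~(d): for each fixed $m$ and $c$ there is exactly one key $k$ with $c_i=(m_i+k_i)\bmod b_i$ for all $i$, namely $k_i=(c_i-m_i)\bmod b_i$. With $K$ uniform on $\prod_i\mathbb{Z}_{b_i}$ this gives $\Pr[C=c\mid M=m]=1/\prod_i b_i$, independent of $m$, so $M$ and $C$ are independent. Since $R$ is independent of $(M,K)$, the pair $(C,R)$ is independent of $M$, and so is every function $g(C,R)=V$. Hence $\Pr[M=m\mid V=v]=\Pr[M=m]$ for every $v$, which is precisely the statement that $V$ carries no information about $M$ (equivalently $I(M;V)=0$ by data processing). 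In particular, no decision procedure on $V$ can distinguish two plaintexts with advantage above the prior, so the adversary's effective observer on $M$ remains $\Obot$.

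Finally, world-independence: nothing in the argument uses a computational assumption or bounds the adversary's running time. The conclusion holds for unbounded adversaries and therefore in every $W_i$, exactly as in Remark~\ref{rem:cell_d_mrotp}.

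The main obstacle is making the adaptive-query step rigorous. One must check that adaptivity cannot leak information, that is, that the query positions themselves do not depend on $M$ except through $C$. I would handle this by the explicit induction showing $q_{j+1}=h_j(C_{q_1},\dots,C_{q_j},R)$, so that the whole transcript is measurable with respect to $(C,R)$. A secondary point to state explicitly is that the key is used once and is never exposed to the oracle. Otherwise the ciphertext could leak information about $K$, and a reused key would break the independence.
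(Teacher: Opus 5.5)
Your proof is correct. It rests on the same key fact as the paper's: perfect secrecy of the MR-OTP makes $C$ independent of $M$. You re-derive this from the unique-key count of the cell~(d) toy example rather than citing \cite[Theorem~1]{Buono2026a}. That count also needs $K$ to be chosen independently of $M$, which you should state.

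The two proofs differ in how they handle adaptivity. The paper inducts on independence itself. Assuming the depth-$(d-1)$ transcript is independent of $M$, it argues that $q_d$ is too. It then argues that the new answer $C[q_d]$ stays independent of $M$ even after conditioning on the earlier transcript, because $C$ is independent of $M$ unconditionally. You use the induction only to show that the transcript is a function $g(C,R)$. You then apply once the fact that a function of a variable independent of $M$ is itself independent of $M$.

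Your route has three advantages:
\begin{itemize}
  \item Joint independence of the whole transcript is immediate. The paper must rely on its parenthetical conditional claim for this, since independence of each new coordinate from $M$ separately would not suffice.
  \item It covers randomised adversaries via $R$.
  \item It covers any oracle whose answers are functions of $C$, such as the general $\mathcal{O}(q_i)$ of Remark~\ref{rem:notation_query}, without further work.
\end{itemize}
The paper's version, for its part, is written in the depth-by-depth form of the adaptive observers of Section~\ref{sec:physical}, which keeps it aligned with that formalism.
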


\begin{proof}
By~\cite[Theorem~1]{Buono2026a}, the MR-OTP achieves perfect secrecy:
for every plaintext $M$ and every ciphertext $c \in \mathcal{D}_B$,
$\Pr[C = c \mid M] = |\mathcal{D}_B|^{-1}$.
In particular, $C$ and $M$ are independent random variables.

We prove by induction on depth $d$ that the adaptive query transcript
$(C[q_1], \dots, C[q_d])$ is independent of $M$.

\emph{Base case ($d = 0$).}
The empty transcript carries no information; independence holds
trivially.

\emph{Inductive step.}
Assume $(C[q_1], \dots, C[q_{d-1}])$ is independent of $M$.
Query $q_d$ is determined by the previous responses
$(C[q_1], \dots, C[q_{d-1}])$; since these are independent of $M$,
so is $q_d$.
The response $C[q_d]$ is the $q_d$-th coordinate of $C$, which by
perfect secrecy is uniform on its range and independent of $M$
(conditioning on any realisation of $q_d$ and the previous
transcript does not change the distribution of $C[q_d]$ given $M$,
because $C$ is independent of $M$ unconditionally).
Hence $(C[q_1], \dots, C[q_d])$ is independent of $M$.

By induction, the full transcript at any depth is independent of $M$,
so oracle access to $C$ provides no advantage to the adversary.
\end{proof}

\begin{definition}[Side-channel violation]
\label{def:sidechannel_violation}
In an Observer World $W_O = (W_i, O)$ with \emph{side-channel
violation}, the adversary has observer $O$ on the input but
additionally observes a second signal $O'(\mathrm{Comp})$ where
$\mathrm{Comp}$ is the computation process of the defender (timing,
power consumption, cache access patterns, etc.) and
$O' : \mathrm{Process} \to S'$ is a second observer on that process.
\end{definition}

\begin{remark}[Notation: oracle violation vs adaptive observer]
\label{rem:notation_query}
Definition~\ref{def:oracle_violation} writes query responses as
$x[q_j]$ (direct symbol access to the input string $x$).
Definition~\ref{def:adaptive_obs} in Section~\ref{sec:physical} writes
responses as $r_i = \mathcal{O}(q_i)$ for an oracle $\mathcal{O}$.
The two are consistent: $x[q_j]$ is the special case in which the
oracle returns the $q_j$-th symbol of the input, while
$\mathcal{O}(q_i)$ allows more general queries whose address $q_i$
is generated adaptively from previous responses.
Both are adaptive observers of finite depth in the sense of
Definition~\ref{def:adaptive_obs}.
\end{remark}

\begin{remark}
\label{rem:eps_metric}
Side-channel violation is formally distinct from oracle violation:
the additional observation is on the \emph{computation process}, not
on the input $x$. The combination $(O(x), O'(\mathrm{Comp}))$ may exceed the
limitations of $O(x)$ alone if the computation process leaks
information about the parts of $x$ that $O$ discards.
This is the observational formalisation of side-channel attacks in cryptography.
\end{remark}

\begin{definition}[Structural violation]
\label{def:structural_violation}
In an Observer World $W_O = (W_i, O)$ with \emph{structural
violation}, the problem $L$ is $O$-saturated (the adversary can pose
the problem correctly under $O$) but the algorithm solving $L$
produces output from which the adversary can recover information about
the parts of $x$ discarded by $O$. The computation betrays the observer's invariant even though the
problem statement does not.
\end{definition}

\begin{remark}[Structural violation and the SIP]
\label{rem:sip_analogy}
Definition~\ref{def:structural_violation} is the precise
observational analogue of the Syntactic Invariance Principle
of~\cite{Buono2026c}. In the SIP, a syntactic calculus cannot derive clauses violating a
syntactic invariant, but a semantic computation might produce such
clauses as a side effect of solving a syntactically invariant problem.
In Definition~\ref{def:structural_violation} the problem is $O$-invariant (syntactic level), but the algorithm's output violates
the invariant (semantic level).
\end{remark}

\subsection{The parametric family $W_O^\eps$}
\label{sec:parametric}

\begin{definition}[Information distance between observers]
\label{def:dinfo}
Fix a prior $P_X$ on $\Sigma^*$.
For observers $O_1 \preceq O_2$, the \emph{information distance}
from $O_1$ to $O_2$ under $P_X$ is
\[
  \dinfo(O_1, O_2)
  = I_{P_X}(X;\, O_2(X)) - I_{P_X}(X;\, O_1(X)),
\]
where $I_{P_X}(X; Y)$ denotes mutual information under $P_X$.
Note $\dinfo(O_1, O_2) \geq 0$ since $O_1 \preceq O_2$ implies
$I(X; O_1(X)) \leq I(X; O_2(X))$ by the data processing
inequality~\cite{Cover2006}.
\end{definition}

\begin{definition}[Parametric Observer World]
\label{def:parametric}
Fix a prior $P_X$ on $\Sigma^*$ and an Observer World $W_O = (W_i,
O)$ with $O \prec \Otop$ strictly (so that $\dinfo(O, \Otop) > 0$).
For $\eps \in [0,1]$, the \emph{parametric Observer World}
$W_O^\eps = (W_O, \eps)$ is an Observer World in which the
adversary's effective observer is some $O_\eps$ with
$O \preceq O_\eps \preceq \Otop$ satisfying
\[
  \eps = \frac{\dinfo(O, O_\eps)}{\dinfo(O, \Otop)}
  = \frac{I_{P_X}(X; O_\eps(X)) - I_{P_X}(X; O(X))}%
         {I_{P_X}(X; \Otop(X)) - I_{P_X}(X; O(X))}.
\]
The denominator is positive by the strict inequality $O \prec \Otop$
and the data processing inequality.
\begin{itemize}
  \item $\eps = 0$: pure Observer World $W_O$; the adversary has $O$.
  \item $\eps = 1$: classical world; the adversary has $\Otop$.
  \item $0 < \eps < 1$: partial violation; the adversary has
        recovered some but not all of the discarded information.
\end{itemize}
\end{definition}

\begin{remark}
\label{rem:eps_qualitative}
The information distance $\dinfo$ in Definition~\ref{def:parametric}
requires a probability measure on $\Sigma^*$ and an entropy
functional compatible with the observer. The precise formulation is an open direction; Appendix~\ref{app:C}
discusses two concrete alternatives (statistical distance and channel
capacity). The qualitative structure of $W_O^\eps$, interpolating between
the pure Observer World ($\eps = 0$) and the classical world
($\eps = 1$), is well-defined without specifying the metric,
since the endpoints are determined by the observers alone.
\end{remark}

\begin{remark}[Robustness with respect to the prior]
\label{rem:prior_robustness}
The quantitative value of $\eps$ for a given $O_\eps$ depends on the
choice of prior $P_X$.
We conjecture that the qualitative results of
Proposition~\ref{prop:richest_cell}, in particular, the
identification of $(\Wpess, \Oprof, \eps > 0)$ as the most
informationally rich cell, are robust across the family of product
priors and max-entropy priors.
Appendix~\ref{app:C} provides examples supporting this conjecture.
A formal robustness theorem is an open direction.
\end{remark}

\subsection{The most informationally rich cell}
\label{sec:richest_cell}

\begin{proposition}[$\Wpess \times \Oprof \times \eps > 0$]
\label{prop:richest_cell}
The cell $(\Wpess, \Oprof, \eps)$ for $\eps > 0$ is the most
informationally rich cell of the parametric table, in the sense that
it simultaneously lacks all three sources of potential security.
In this world:
\begin{enumerate}[label=(\roman*)]
  \item NP problems are hard on average and no one-way function
        exists ($\Wpess$).
  \item The adversary in its pure state ($\eps = 0$) cannot see the
        ordering of symbols ($\Oprof$).
  \item The adversary has partially broken the observational invariant
        ($\eps > 0$), recovering ordering information through oracle
        queries, side channels, or structural violations.
  \item There is no computational security (no one-way functions in
        $\Wpess$), no invariant security ($\eps > 0$ means the
        observational invariant of $\Oprof$ is partially broken), and
        no observational security in the pure sense (the adversary is
        no longer confined to $\Oprof$).
\end{enumerate}
This world is genuinely insecure for reasons that no single framework, five-world, SIP, or observational, addresses individually.
\end{proposition}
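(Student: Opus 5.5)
The plan is to verify the four clauses separately. Items (i)--(iii) are read off the definitions, and the real work is in (iv) and in making ``most informationally rich'' precise.

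For (i), I would cite Definition~\ref{def:five_worlds}(iii) directly, exactly as in the proof of Proposition~\ref{prop:cell_b}(i). For (ii), I would use Definition~\ref{def:canonical_obs}(iv) together with the $\eps=0$ clause of Definition~\ref{def:parametric}. At $\eps=0$ the effective observer is $\Oprof$ itself, so $x\sim_{\Oprof} y$ whenever $y$ is a permutation of $x$. The toy example for cell (a) ($10$ versus $01$) then shows that ordering is invisible. For (iii), $\eps>0$ forces $\dinfo(\Oprof,O_\eps)>0$. Hence $O_\eps$ is not equivalent to $\Oprof$ under $\preceq$, and some pair of strings with equal profile is separated by $O_\eps$, so positive-probability ordering information has been recovered. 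The mechanism can be any of Definitions~\ref{def:oracle_violation}, \ref{def:sidechannel_violation} or~\ref{def:structural_violation}. For oracle violations I would note that a single query $x[q]$ on $10$ versus $01$ already separates the two strings, so $\eps>0$ is attainable.

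For (iv), I would treat the three sources of security in turn. Computational security is absent because $\Wpess$ has no one-way functions, which is (i). Invariant security is absent because the $\Oprof$-saturation constraint is violated by $O_\eps$, which is (iii). For observational security in the pure sense, I would show that $\PO{O_\eps}\supsetneq\PO{\Oprof}$ whenever $O_\eps$ strictly refines $\Oprof$ on the support. Take two profile-equal strings $u,v$ that are separated by $O_\eps$, and let $L$ be the $O_\eps$-class of $u$ restricted to the length $|u|$. This $L$ is $O_\eps$-saturated but not permutation-closed, so by \cite[Theorem~2.1]{Buono2026d} it is not $\Oprof$-decidable. Hence the adversary is no longer confined to $\cL(\Oprof)$. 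Cell (c) (Proposition~\ref{prop:cell_c}) plays the supporting role here: in this world no OWF exists to fall back on even for ordering-sensitive functions.

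The main obstacle is the superlative ``most informationally rich'', which needs an order on cells. I would define it as the set of absent security sources drawn from \{computational, invariant, observational\}, ordered by inclusion. I would then check by cases that every other cell keeps at least one source. $\Wmini$ and $\Wcryp$ have one-way functions. Any cell with $\eps=0$ keeps the invariant. Cells with $\Otop$ or $\eps=1$ are classical, with no observational axis to lose. $\Walg$ and $\Wheur$ lack even average-case hardness, so the hardness in (i) is not present to be exploited informationally there. That last case is the delicate one, and I expect I would have to state it as a convention: richness counts the coexistence of average-case hardness with the absence of every protective source. On this reading, $\Wpess$ is the unique world with hardness but no one-way functions, and the claim follows. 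Without that convention the statement is interpretive rather than a theorem, and I would flag this explicitly.
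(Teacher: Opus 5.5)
Your verification of (i)--(iv) is essentially the paper's proof. The paper reads (i)--(iii) off Definitions~\ref{def:five_worlds}(iii), \ref{def:canonical_obs}(iv) and~\ref{def:parametric}. It gets (iv) from the absence of one-way functions in $\Wpess$ together with $O_\eps$ carrying strictly more information than $\Oprof$. Your data-processing argument (a positive-probability profile-equal pair separated by $O_\eps$) and your witness $L=[u]_{O_\eps}$ make explicit what the paper only asserts. One minor point: only $L\in\PO{O_\eps}\setminus\cL(\Oprof)$ is needed, and membership in $\PO{O_\eps}$ is just a comparison of $\mathrm{enc}(O_\eps(x))$ with a hard-coded string. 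The inclusion $\PO{\Oprof}\subseteq\PO{O_\eps}$ implicit in your $\supsetneq$ is not guaranteed, because $\preceq$ allows a non-efficient $f$ with $\Oprof=f\circ O_\eps$.

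\textbf{The step that genuinely fails is your uniqueness check for the superlative.} Under your inclusion order it is false that every other cell keeps a source. Take any other strict observer $O\in\{\Obot,\Olen,\Opar,O_k\}$. The cell $(\Wpess,O,\eps)$ with $\eps>0$ lacks all three sources by your own (iv) argument verbatim. No one-way functions exist, and $\dinfo(O,O_\eps)>0$ yields a $\sim_O$-equivalent pair separated by $O_\eps$, hence an $O_\eps$-class outside $\cL(O)$. None of your four cases covers these cells. Your average-case-hardness convention does not exclude them either, since they lie in $\Wpess$.

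The case analysis has two further defects.
\begin{itemize}
\item The cells $(\Walg,\Oprof,\eps)$ and $(\Wheur,\Oprof,\eps)$ also have no one-way functions. $\mathsf{P}=\mathsf{NP}$ excludes them, and in $\Wheur$ a one-way function would yield an NP search problem hard on average over a samplable distribution. So these cells tie with the target under your order, and the convention only stipulates the tie away.
\item $\eps=1$ belongs to the statement's range $\eps>0$, so it cannot also be one of the ``other cells'' that keep a source. Your (iv) argument applies to it unchanged.
\end{itemize}

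The paper proves no uniqueness. It takes ``in the sense that it simultaneously lacks all three sources'' as the meaning of ``most informationally rich'', so (iv) is the entire content. Drop the cross-cell comparison and adopt that reading; your argument for (i)--(iv) is then complete. You can also note that the same property holds for $(\Wpess,O,\eps)$ with any $O\prec\Otop$. Singling out $\Oprof$ is therefore a choice of emphasis rather than a theorem.
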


\begin{proof}
Parts (i) and (ii): by definitions of
$\Wpess$~\cite{Impagliazzo1995} and $\Oprof$~\cite{Buono2026d}.
Part (iii): by Definition~\ref{def:parametric} with $\eps > 0$.
Part (iv): in $\Wpess$ there are no one-way functions, so no
computational security primitive exists.
With $\eps > 0$, the adversary's effective observer $O_\eps$
carries strictly more information about $x$ than $\Oprof$ does
(under $P_X$, by Definition~\ref{def:parametric}), so the
observational invariant of $\Oprof$ is partially broken.
The three sources of potential security, computational hardness,
observational blindness, and invariant confinement, have all
failed simultaneously.
\end{proof}

\begin{remark}[Connection to the Base Recovery Problem]
\label{rem:brp_connection}
The Base Recovery Problem (BRP) of~\cite{Buono2026b} asks an
adversary to recover the base sequence $B$ of an MR-OTP from
plaintext-ciphertext pairs.
In the world $(\Wpess, \Oprof, \eps = 0)$, the BRP is not even
well-posed for the adversary: the base sequence $B$ encodes the
structure of the cipher at the ordering level, which $\Oprof$
discards. With $\eps > 0$, for instance, when the adversary has a
side-channel on the encryption process,  the BRP becomes
increasingly tractable as $\eps$ increases. Whether the lower bound of~\cite[Theorem~8.10]{Buono2026b} on the
query complexity of the BRP degrades gracefully with $\eps$ is Open Problem~\ref{op:eps_brp} below.
\end{remark}

\begin{openproblem}[BRP complexity as a function of $\eps$]
\label{op:eps_brp}
Does the lower bound of~\cite[Theorem~8.10]{Buono2026b} on the query
complexity of the Base Recovery Problem degrade gracefully as $\eps$
increases in the parametric Observer World $(\Wpess, \Oprof, \eps)$?
Characterise the function $f(\eps)$ such that the BRP requires at
least $\Omega(\prod_i b_i \cdot f(\eps))$ queries, with $f(0) = 1$
(the unconditional lower bound) and $f(1) = 0$ (classical world,
where the BRP is information-theoretically solvable with sufficient
plaintext-ciphertext pairs).
\end{openproblem}

\section{Physical instantiation of the Observer World}
\label{sec:physical}

The results of the preceding sections are mathematical.
This section identifies connections between the observer-world
framework and physical theory; each connection is formalised as an
open problem.
\textbf{The author is a computer scientist, not a physicist.}
The observations below are offered as open questions, stated with
precise mathematical hypotheses, not as established results. The three subsections follow a common pattern: a mathematical property
of the observational hierarchy corresponds to a physical phenomenon, a cost, a transition, or a bound.
In each case the mathematical structure was identified first and the
physical correspondence found second. Each subsection closes with a precise formulation of the open problem.
Numerical estimates are in Appendix~\ref{app:C}.

\subsection{Adaptive observers, oracles, and the cost of observation}
\label{sec:adaptive_thermo}

The observational hierarchy of~\cite{Buono2026d} consists of
non-adaptive observers: each $O : \Sigma^* \to S$ is a fixed
preprocessing map, independent of any computation that follows.
A natural extension replaces $O$ with an \emph{adaptive observer}
$\mathcal{T}$: a decision tree in which each internal node is a query
and each branch is a possible response.
An adaptive observer of depth $d$ makes at most $d$ sequential
queries, each potentially depending on previous responses.

\begin{definition}[Adaptive observer and parametrised complexity classes]
\label{def:adaptive_obs}
An \emph{adaptive observer of depth $d$} is a protocol that
sequentially issues queries $q_1, q_2, \dots, q_d$ to the input
string $x$ and collects responses $r_i = x[q_i]$ (or more generally
$r_i = \mathcal{O}(q_i)$ for an oracle $\mathcal{O}$), where each
query $q_{i+1}$ may depend on the previous responses $r_1, \dots,
r_i$.
The transcript $T_d(x) = (r_1, \dots, r_d)$ is the adaptive
observer's output.
The \emph{observer-and-depth parametrised class} is
\[
  \mathbf{P}_{O,d}
  = \bigl\{L \;\bigm|\;
    \exists \text{ poly-time TM } M :
    M(O(x), T_d(x)) \text{ decides } L\bigr\}.
\]
Non-adaptive observers of~\cite{Buono2026d} correspond to $d = 0$.
\end{definition}

\begin{remark}[Oracles are adaptive observers]
\label{rem:oracle-adaptive}
A computational oracle $\mathcal{O} : \Sigma^* \to \{0,1\}$,
queried adaptively with query sequence $q_1, q_2, \dots, q_d$ where
each $q_{i+1}$ may depend on the previous responses
$(\mathcal{O}(q_1), \dots, \mathcal{O}(q_i))$, is an adaptive
observer of depth $d$: its transcript is
$(\mathcal{O}(q_1), \dots, \mathcal{O}(q_d))$.
The non-adaptive observers of~\cite{Buono2026d} correspond to the
special case $d = 0$ (all queries are fixed in advance, independent
of responses).
Oracle-relative separations in complexity theory, in particular,
the relativisation barrier of Baker, Gill, and
Solovay~\cite{BGS1975}, are therefore separations in the hierarchy
of adaptive Observer Worlds, not in the non-adaptive hierarchy
studied in~\cite{Buono2026d}.
\end{remark}

The adaptive hierarchy has a physical cost.
By Landauer's principle~\cite{Landauer1961}, each query that is
recorded and subsequently erased to allow cyclic operation dissipates
at least $k_B T \ln 2$ joules.
A system operating as an adaptive observer of depth $d$ in a cycle
must erase at least $d$ bits per cycle, at a minimum thermodynamic
cost of $d \cdot k_B T \ln 2$ joules per cycle.

\begin{remark}[The Maxwell demon as adaptive observer]
\label{rem:maxwell_adaptive}
The Maxwell demon of~\cite{Bennett1982} is an adaptive observer of
depth proportional to the number of molecules it measures: each
measurement is a query on the velocity of a molecule, and the
demon's decision to open or close the partition is the
query-dependent response.
Bennett's resolution of the paradox~\cite{Bennett1982}, that the
demon must erase its measurement records at a cost that cancels the apparent 
entropy decrease, is the statement that an
adaptive observer cannot operate cyclically without paying the
Landauer cost of its depth. The impossibility of a perpetual motion machine is the theorem that
no physical system can operate as a non-trivial adaptive observer in a cycle without this cost.
\end{remark}

\begin{openproblem}[Adaptive observer hierarchy and Landauer cost]
\label{op:adaptive}
Formalise the hierarchy of adaptive observers as an extension of the
non-adaptive hierarchy of~\cite{Buono2026d}, characterise the
complexity classes $\mathbf{P}_{O,d}$ and $\mathbf{NP}_{O,d}$
parametrised by observer $O$ and adaptive depth $d$, and determine
whether the Landauer cost of depth provides a physical lower bound on
the complexity of transitioning between levels of the hierarchy.
The information distance $\dinfo$ of Definition~\ref{def:dinfo} would
provide a natural measure for such a lower bound.
\end{openproblem}

\subsection{Quantum measurement as an observational transition}
\label{sec:quantum_obs}

Prior to measurement, a quantum system in superposition is one in
which the observer has $\Obot$ on the definite value of the
observable: the value is not determined in the observation space of
any non-trivial observer. After measurement, the observer has a definite value: a transition to
a higher level of the observational hierarchy. The collapse of the wave function, in this reading, is the transition
from a lower to a higher level of the observational hierarchy, and
the irreversibility of that transition is the Landauer cost of recording the measurement outcome.

The decoherence programme~\cite{Zurek2003} provides a continuous
version of this picture: the quantum-to-classical transition is the
gradual suppression of interference terms as the system becomes
entangled with its environment, corresponding to the continuous
transition from an observer that can detect interference (closer to
$\Otop$ in the quantum extension of the hierarchy) to one that cannot
(closer to $\Oprof$ in the classical limit).

The formal extension of the observational hierarchy to quantum systems
would replace $O : \Sigma^* \to S$ with $O : \mathcal{H} \to
\mathcal{M}$, where $\mathcal{H}$ is a Hilbert space and
$\mathcal{M}$ is a measurement space; the appropriate formalism is
that of positive operator-valued measures (POVMs)~\cite{Nielsen2000}.

\begin{openproblem}[Quantum extension of the observational hierarchy]
\label{op:quantum}
Construct a quantum extension of the observational hierarchy in which
the non-adaptive observers of~\cite{Buono2026d} are the classical
limit, decoherence corresponds to a downward transition in the
hierarchy, and the Landauer cost of measurement is the energy required
to move between levels. Determine whether the five-world framework extends naturally to this
quantum setting.
\end{openproblem}

\subsection{The cosmological bit budget and the physical ceiling of
  the observational hierarchy}
\label{sec:cosmo_bits}

The holographic bound of
Bekenstein~\cite{Bekenstein1973,Bekenstein1981} and 't
Hooft--Susskind~\cite{tHooft1993,Susskind1995} limits the information
content of any physical region to at most $A/4\ell_P^2$ bits, where
$A$ is the area of the bounding surface and $\ell_P$ is the Planck
length. For the observable universe, this gives approximately $10^{122}$
bits, a finite number. The Margolus-Levitin bound~\cite{MargolusLevitin1998} limits the
total number of elementary operations a physical system of energy $E$
can perform per unit time to $2E/h$; integrated over the lifetime of
the observable universe, this gives approximately $10^{121}$ total
operations.

These bounds imply that the complete observer $\Otop$ of the abstract
hierarchy is not physically realisable: the physical $\Otop^{\mathrm{phys}}$ has capacity at most $10^{122}$ bits, and
lies strictly below the mathematical $\Otop$ in the observational
order. The observational hierarchy, when instantiated in a physical universe
with finite information capacity, has a finite, measurable ceiling.

\subsubsection*{The 203-bit temporal bound and its origin}

The starting point for the following observation was a
computer-science argument: to index all physically distinguishable
moments in the lifetime of the observable universe, one needs a
pigeonhole with one slot per Planck-time step, and the number of bits
required to address that pigeonhole is $\lceil \log_2 N_t \rceil$,
where $N_t$ is the number of Planck-time steps since the Big Bang.

The numerical value of $N_t$ was computed with the assistance of an
AI language model using standard physical values: the age of the
universe ($\approx 4.3 \times 10^{17}$ s) and the Planck time ($t_P
\approx 5.39 \times 10^{-44}$ s), giving $N_t \approx 8 \times
10^{60}$ and $\lceil \log_2 N_t \rceil \approx 203$ bits. The author verified the order of magnitude and subsequently found
that this specific calculation already appears in the literature~\cite{Pratten2008}, though not in the context of
observational hierarchies. The idea of applying the pigeonhole argument to cosmic time in order
to characterise the informational cost of the temporal coordinate of
$\Oprof$ originated with the author.

Having established that 203 bits suffice to index cosmic time, the author then asked whether this figure coincides with the maximum entropy generation rate of a physical true random number generator,
that is, whether the informational compression of cosmic time has a direct structural counterpart in the universe's capacity to generate
randomness. The answer is negative, the Margolus-Levitin bound gives approximately $10^{121}$ total operations over the lifetime of the
observable universe, requiring approximately 402 bits to index. The original hypothesis does not hold.

However, the investigation yields a more precise structural observation than the one originally sought.
The gap between 203 and 402 reflects the factor of 2 in the Margolus-Levitin bound $2E/h$, which is structural.
The comparison reveals:

\begin{itemize}
  \item 203 bits suffice to index all of cosmic time at Planck
        resolution.
  \item $\approx 402$ bits suffice to index the total number of
        operations the universe can have performed.
  \item $\approx 10^{122}$ bits describe the total state space
        by the Bekenstein bound.
\end{itemize}

The observer $\Oprof$, which discards the temporal coordinate,
discards the informationally cheapest dimension of the physical
universe, the one that costs only 203 bits, but loses all
causal structure in doing so. This asymmetry between the informational cost of time (minimal) and
its structural role (maximal, as the carrier of all causal order) is
the physical counterpart of the central result of~\cite{Buono2026d}:
that $\Oprof$-saturation is a structurally strong condition that
eliminates all ordering-dependent difficulty.

\begin{remark}[Connection to the five worlds]
\label{rem:cosmo_five_worlds}
In each of the five worlds of Impagliazzo~\cite{Impagliazzo1995},
the bounds of Sections~\ref{sec:adaptive_thermo}--\ref{sec:cosmo_bits}
apply: they are consequences of the physics of the universe, not of
the computational assumptions that define the worlds.
An Observer World $W_O$ instantiated in the physical universe
therefore carries the constraint that the adaptive depth of any
physically realisable observer is bounded by $10^{121}$ total
operations (assuming each query costs at least one elementary
operation), and the total information budget is $10^{122}$ bits.
The temporal coordinate costs 203 bits of that budget; this is a
negligible fraction ($203 / 10^{122} \approx 0$) of the total, yet
it is the dimension that carries all causal order. The remaining $\approx 10^{122}$ bits describe spatial and energetic
state. Whether this split has consequences for the complexity of problems in
the cells of the world-observer table
(Section~\ref{sec:table}) is an open question.
\end{remark}

\begin{openproblem}[Cosmological cutoff and the observational hierarchy]
\label{op:cosmo}
Does the observational hierarchy, when instantiated in a physical
universe with the Bekenstein and Margolus-Levitin bounds as constraints, predict a natural cutoff adaptive depth that coincides
with the physical bounds? Is there a sense in which the 203-bit temporal bound emerges from the
theory rather than being imported from physics as an external
parameter?
\end{openproblem}

\subsection*{A remark on the scope of these connections}

The connections identified in
Sections~\ref{sec:adaptive_thermo}--\ref{sec:cosmo_bits} share a
common structure: a mathematical property of the observational
hierarchy (non-adaptivity, the gap between levels, the finite ceiling
of $\Otop^{\mathrm{phys}}$) corresponds to a physical phenomenon
(thermodynamic cost of measurement, wave function collapse, the
holographic bound). In each case the mathematical structure was identified first and the
physical correspondence found second.

Whether these correspondences reflect a deeper unity, whether the
laws of thermodynamics are, in some precise sense, theorems about the
observational hierarchy of physical systems, is a question that the
present framework cannot answer, but for which it offers a precise
mathematical vocabulary. The formalisation would require, at minimum, a probability measure on
physical histories, a dynamics compatible with the observational
order, and an entropy functional consistent with both Landauer's
principle and the holographic bound. These things are not assembled here; identifying them as a
coherent research programme is the contribution of this section.%
\footnote{The numerical estimates in Section~\ref{sec:cosmo_bits}
were computed with AI assistance and verified by the author.
All scientific ideas and conclusions are the author's own.}

\section{Conclusion}
\label{sec:conclusion}

The five worlds of Impagliazzo describe a line: a single axis
parametrising the existence of computational hardness and
cryptographic primitives. This paper has shown that the observational hierarchy
of~\cite{Buono2026d} defines a second, orthogonal axis, producing a
two-dimensional landscape in which the five worlds are five columns
and the observational levels are the rows.

The central result, that the structural collapse
$\PO{\Oprof} = \NPO{\Oprof} \subsetneq \mathbf{P}$ holds
unconditionally in all five worlds, shows that the two axes are
genuinely independent: no computational assumption implies the
observational collapse, and no observational constraint implies any
computational assumption.

The Observer World $W_O$ formalises this independence as a sixth
landscape. The labeled cells (a)--(d) of the world-observer table identify four
phenomena that the five-world framework cannot express: the
observational collapse in $\Walg$ despite $\mathsf{P} = \mathsf{NP}$,
the non-additive interaction of hardness and blindness in $\Wpess$,
the ordering-sensitivity lower bound on one-way functions in $\Wmini$,
and the $\Obot$-blindness security of the MR-OTP in the $\Wcryp \times
\Obot$ cell.

The parametric family $W_O^\eps$ extends the framework to worlds in
which the observational invariant is partially violated, with the most
informationally rich cell being $\Wpess \times \Oprof \times \eps >
0$: a world in which computational, invariant, and observational
security have all failed simultaneously.

Section~\ref{sec:physical} identifies three connections between the
Observer World framework and physical information limits,
thermodynamics (Landauer cost of adaptive observation), quantum
mechanics (decoherence as a downward transition in the hierarchy),
and cosmology (the holographic bound as a physical ceiling on
$\Otop$), and states a precise open problem for each.

The Observer World provides a unified language for modelling
adversaries in cryptography, extending Impagliazzo's five worlds with
an orthogonal axis that captures what the adversary can observe.
Formal open directions include: characterising how the BRP query
lower bound degrades with $\eps$ (Open Problem~\ref{op:eps_brp});
formalising the hierarchy of adaptive observer complexity classes
$\mathbf{P}_{O,d}$ and the Landauer cost of depth as a physical lower
bound (Open Problem~\ref{op:adaptive}); constructing a quantum
extension of the observational hierarchy (Open
Problem~\ref{op:quantum}); and determining whether the cosmological
bounds predict a natural cutoff depth (Open Problem~\ref{op:cosmo}).
The information metric for $\eps$ and the robustness of qualitative
results with respect to the prior are addressed in Appendix~\ref{app:C}
and remain an open direction.

\paragraph{Personal note.}
%
%
Several of the connections developed in this paper emerged
unexpectedly during work on the Observer paper~\cite{Buono2026d}.
The observation that the five worlds presuppose $\Otop$ silently led
naturally to the question of what happens when the adversary's
observer is constrained. The connection to thermodynamics emerged from the question of what it
costs an observer to become more powerful. The 203-bit cosmological bound emerged from applying a pigeonhole
argument to the indexing of Planck-time steps, an idea that originated with the author and was verified numerically with AI
assistance. The connection to the SIP emerged from asking what it means for an
adversary to circumvent an observational constraint. Each connection was forced by the logic of the previous one; none was
planned in advance.%
\footnote{The author used an artificial intelligence based language
assistant to support text traslation, revision and numerical verification.
All scientific ideas and conclusions are the author's own.}

\appendix

\section{Proof of Lemma~\ref{lem:invariant_independent}}
\label{app:A}

This appendix provides the complete constructive proof of
Lemma~\ref{lem:invariant_independent}.

\begin{proof}[Full proof of Lemma~\ref{lem:invariant_independent}]
Let $O \prec \Otop$; since $O \not\simeq \Otop$, there exist
$x_0 \neq x_1$ with $O(x_0) = O(x_1)$. Enumerate the $\sim_O$-equivalence classes as
$[u_0]_O, [u_1]_O, [u_2]_O, \dots$ (where $[u]_O = \{x \mid O(x) =
O(u)\}$) and let $M_0, M_1, M_2, \dots$ be a standard enumeration of
all deterministic polynomial-time Turing machines that receive $O(x)$
as input. For each $e \in \NN$, choose a canonical representative $u_e^* \in
[u_e]_O$ and define:
\[
  [u_e]_O \subseteq L \quad \iff \quad M_e \text{ rejects } O(u_e^*).
\]

\emph{$L$ is $O$-saturated.}
By construction, $L$ is a union of complete $\sim_O$-classes.
If $O(x) = O(y)$ then $x$ and $y$ are in the same class, so either
both are in $L$ or neither is: $x \in L \iff y \in L$.
By Definition~3.2 of~\cite{Buono2026d}, $L$ is $O$-saturated.

\emph{$L \notin \PO{O}$.}
Suppose for contradiction that some $M_e$ decides $L$ under observer
$O$.
Consider the class $[u_e]_O$:
\begin{itemize}
  \item If $M_e$ accepts $O(u_e^*)$: then by our construction,
        $[u_e]_O \not\subseteq L$, so $u_e^* \notin L$.
        But $M_e$ accepts $O(u_e^*)$, so $M_e$ decides $u_e^* \in L$:
        contradiction.
  \item If $M_e$ rejects $O(u_e^*)$: then $[u_e]_O \subseteq L$, so
        $u_e^* \in L$.
        But $M_e$ rejects $O(u_e^*)$: contradiction.
\end{itemize}
In both cases $M_e$ is wrong on $u_e^*$, so $M_e$ does not decide
$L$.
Since $e$ was arbitrary, $L \notin \PO{O}$.

\emph{Independence from five-world assumptions.}
The construction uses only: the $\sim_O$ relation (a property of the
observer), the canonical enumeration of polynomial-time TMs (a
standard recursion-theoretic object), and the choice of
representatives. No assumption about $\mathsf{P}$ vs $\mathsf{NP}$, about the
existence of one-way functions, or about public-key cryptography is
invoked at any step. Hence membership in $L$ is independent of all five-world assumptions.
\end{proof}

\section{Profile preimage: correctness and complexity}
\label{app:B}

This appendix provides the complete proof of
Lemma~\ref{lem:profile_preimage}.

\begin{proof}[Full proof of Lemma~\ref{lem:profile_preimage}]
\emph{Correctness.}
Algorithm~\ref{alg:preimage} appends $c_i$ copies of symbol $a_i$
for each $i \in \{0, \dots, k-1\}$.
By construction, the resulting string $x'$ satisfies $|x'|_{a_i} =
c_i$ for all $i$: symbol $a_i$ appears exactly $c_i$ times and no
other appearances are introduced.
Hence $\Oprof(x') = (c_0, \dots, c_{k-1}) = \pi$.

\emph{Complexity.}
The outer loop executes $k$ iterations.
In iteration $i$, the append operation is performed $c_i$ times; each
append is $O(1)$ amortised over a dynamic array.
Total operations: $\sum_{i=0}^{k-1} c_i = n$.
Time complexity: $O(n)$.

\emph{Remark on uniqueness.}
Algorithm~\ref{alg:preimage} returns the canonically sorted string
$a_0^{c_0} a_1^{c_1} \cdots a_{k-1}^{c_{k-1}}$.
For the purposes of Proposition~\ref{prop:cell_c}, any string with
profile $\pi$ suffices; the sorted string is one concrete choice.
\end{proof}

\section{Cosmological calculations and alternative metrics}
\label{app:C}

This appendix documents the numerical calculations cited in
Section~\ref{sec:cosmo_bits}, and the alternative information metrics
discussed in Remarks~\ref{rem:eps_qualitative}
and~\ref{rem:prior_robustness}.

\subsection{Cosmological bit budget}

Let $t_{\mathrm{universe}} \approx 4.3 \times 10^{17}$ s be the age
of the observable universe and $t_P \approx 5.39 \times 10^{-44}$ s
be the Planck time (standard CODATA values).
The number of Planck-time steps since the Big Bang is:
\[
  N_t
  = \frac{t_{\mathrm{universe}}}{t_P}
  \approx \frac{4.3 \times 10^{17}}{5.39 \times 10^{-44}}
  \approx 7.98 \times 10^{60}.
\]
The number of bits to index all such steps is:
\[
  \lceil \log_2 N_t \rceil
  = \lceil \log_2(7.98 \times 10^{60}) \rceil
  = \lceil 202.6 \rceil
  = 203.
\]
These are order-of-magnitude estimates; the physical constants carry
observational uncertainties at the few-percent level, which do not
affect the rounded result.
This calculation was verified numerically with AI assistance and is
also reported in~\cite{Pratten2008} (in a different context).

For the Margolus-Levitin bound~\cite{MargolusLevitin1998}: total
observable-universe energy $E \approx 10^{69}$ J, giving
$\approx 2E/h \cdot t_{\mathrm{universe}} \approx 10^{121}$
operations and $\lceil \log_2 10^{121} \rceil \approx 402$ bits.

\subsection{Alternative metrics for $\eps$}

\emph{Statistical distance.}
$\dinfo^{\mathrm{TV}}(O_1, O_2) = \mathbb{E}_{x \sim P_X}[
d_{\mathrm{TV}}(\delta_{O_1(x)}, \delta_{O_2(x)})]$ where
$d_{\mathrm{TV}}$ is total variation.
For deterministic observers, this reduces to the probability under
$P_X$ that $O_1(x) \neq O_2(x)$, which is $0$ when $O_1 = O_2$ and
positive otherwise.
This gives the same $\eps = 0$ and $\eps = 1$ endpoints as
Definition~\ref{def:parametric}.

\emph{Channel capacity.}
$\dinfo^{\mathrm{cap}}(O_1, O_2) = \mathrm{cap}(O_2) -
\mathrm{cap}(O_1)$ where
$\mathrm{cap}(O) = \max_{P_X} I(X; O(X))$
is the channel capacity of observer $O$.
This is prior-independent but requires solving a maximisation problem
over all priors.

Both alternatives give the same qualitative picture for the example
of Section~\ref{sec:richest_cell}: the cell $(\Wpess, \Oprof, \eps >
0)$ remains the most informationally rich regardless of which metric
is used, because all three metrics (mutual information from
Definition~\ref{def:dinfo}, statistical distance, and channel
capacity) agree on $\eps = 0$ (pure Observer World) and $\eps = 1$
(classical world).

\subsection{Landauer cost and reversible computation}

Landauer's principle~\cite{Landauer1961} states that erasing one bit
at temperature $T$ requires at least $k_B T \ln 2$ joules.
For an adaptive observer of depth $d$ operating cyclically with $b$
bits per response, the minimum cost per cycle is $bd \cdot k_B T \ln
2$ joules.
The hypothesis that the observer operates cyclically (and thus must
erase) is a physical modelling assumption, not derived from the
mathematical framework~\cite{Bennett1982}.
Reversible computation can in principle approach the Landauer bound
arbitrarily closely; the bound is a lower limit, not a prediction of
typical behaviour.

\bibliographystyle{plain}
\bibliography{biblio}

\begin{thebibliography}{10}

\bibitem{AaronsonWigderson2009}
Scott Aaronson and Avi Wigderson.
\newblock Algebrization: A new barrier in complexity theory.
\newblock {\em ACM Transactions on Computation Theory}, 1(1):2:1--2:54, 2009.

\bibitem{BGS1975}
Theodore Baker, John Gill, and Robert Solovay.
\newblock Relativizations of the {P} = {NP} question.
\newblock {\em SIAM Journal on Computing}, 4(4):431--442, 1975.

\bibitem{Bekenstein1973}
Jacob~D. Bekenstein.
\newblock Black holes and entropy.
\newblock {\em Physical Review D}, 7(8):2333--2346, 1973.

\bibitem{Bekenstein1981}
Jacob~D. Bekenstein.
\newblock Universal upper bound on the entropy-to-energy ratio for bounded
  systems.
\newblock {\em Physical Review D}, 23(2):287--298, 1981.

\bibitem{Bennett1982}
Charles~H. Bennett.
\newblock The thermodynamics of computation --- a review.
\newblock {\em International Journal of Theoretical Physics}, 21(12):905--940,
  1982.

\bibitem{blackwell1953}
David Blackwell.
\newblock Equivalent comparisons of experiments.
\newblock {\em The Annals of Mathematical Statistics}, 24(2):265--272, 1953.

\bibitem{Buono2026b}
Fabio Francesco~Gabriele Buono.
\newblock From bits to mixed-radix keys: Horner decomposition, uniform
  sampling, and the information-theoretic {QKD} interface of the {MR-OTP}.
\newblock 2026.
\newblock arXiv preprint arXiv:2606.18526.

\bibitem{Buono2026a}
Fabio Francesco~Gabriele Buono.
\newblock New ideas on a new old type of cipher: The mixed-radix one-time pad.
\newblock 2026.
\newblock arXiv preprint arXiv:2606.16040.

\bibitem{Buono2026d}
Fabio Francesco~Gabriele Buono.
\newblock Observers, symmetries, and the hierarchy of language classes: A
  theory of computation parameterised by the observer, 2026.
\newblock arXiv:submit/7752238 [cs.CR].

\bibitem{Buono2026c}
Fabio Francesco~Gabriele Buono.
\newblock Syntactic systems cannot see semantic invariants.
\newblock 2026.
\newblock arXiv preprint arXiv:2606.17275.

\bibitem{Carroll2010}
Sean Carroll.
\newblock {\em From Eternity to Here: The Quest for the Ultimate Theory of
  Time}.
\newblock Dutton, New York, 2010.

\bibitem{chomsky1956}
Noam Chomsky.
\newblock Three models for the description of language.
\newblock {\em IRE Transactions on Information Theory}, 2(3):113--124, 1956.

\bibitem{Cover2006}
Thomas~M. Cover and Joy~A. Thomas.
\newblock {\em Elements of Information Theory}.
\newblock Wiley, 2 edition, 2006.

\bibitem{Deutsch1997}
David Deutsch.
\newblock {\em The Fabric of Reality}.
\newblock Penguin, London, 1997.

\bibitem{Garey1979}
Michael~R. Garey and David~S. Johnson.
\newblock {\em Computers and Intractability: A Guide to the Theory of
  {NP}-Completeness}.
\newblock W. H. Freeman, San Francisco, CA, 1979.

\bibitem{Impagliazzo1995}
Russell Impagliazzo.
\newblock A personal view of average-case complexity.
\newblock In {\em Proceedings of the 10th Annual Structure in Complexity Theory
  Conference}, pages 134--147. IEEE, 1995.

\bibitem{Katz2014}
Jonathan Katz and Yehuda Lindell.
\newblock {\em Introduction to Modern Cryptography}.
\newblock CRC Press, 2 edition, 2014.

\bibitem{Landauer1961}
Rolf Landauer.
\newblock Irreversibility and heat generation in the computing process.
\newblock {\em IBM Journal of Research and Development}, 5(3):183--191, 1961.

\bibitem{Lloyd2000}
Seth Lloyd.
\newblock Ultimate physical limits to computation.
\newblock {\em Nature}, 406:1047--1054, 2000.

\bibitem{MargolusLevitin1998}
Norman Margolus and Lev~B. Levitin.
\newblock The maximum speed of dynamical evolution.
\newblock {\em Physica D: Nonlinear Phenomena}, 120(1--2):188--195, 1998.

\bibitem{Nielsen2000}
Michael~A. Nielsen and Isaac~L. Chuang.
\newblock {\em Quantum Computation and Quantum Information}.
\newblock Cambridge University Press, 2000.

\bibitem{Penrose2004}
Roger Penrose.
\newblock {\em The Road to Reality: A Complete Guide to the Laws of the
  Universe}.
\newblock Jonathan Cape, London, 2004.

\bibitem{Pratten2008}
David Pratten.
\newblock Addressing the observable universe, 2008.
\newblock
  \\url{https://davidpratten.com/2008/02/11/addressing-the-observable-universe/}.

\bibitem{RazborovRudich1997}
Alexander~A. Razborov and Steven Rudich.
\newblock Natural proofs.
\newblock {\em Journal of Computer and System Sciences}, 55(1):24--35, 1997.

\bibitem{Regev2009}
Oded Regev.
\newblock On lattices, learning with errors, random linear codes, and
  cryptography.
\newblock {\em Journal of the ACM}, 56(6):34:1--34:40, 2009.

\bibitem{Shannon1949}
Claude~E. Shannon.
\newblock Communication theory of secrecy systems.
\newblock {\em Bell System Technical Journal}, 28(4):656--715, 1949.

\bibitem{simon1975}
Imre Simon.
\newblock Piecewise testable events.
\newblock In {\em Automata Theory and Formal Languages}, volume~33 of {\em
  Lecture Notes in Computer Science}, pages 214--222. Springer, 1975.

\bibitem{sipser2012}
Michael Sipser.
\newblock {\em Introduction to the Theory of Computation}.
\newblock Cengage Learning, 3 edition, 2012.

\bibitem{Susskind1995}
Leonard Susskind.
\newblock The world as a hologram.
\newblock {\em Journal of Mathematical Physics}, 36(11):6377--6396, 1995.

\bibitem{tHooft1993}
Gerard {'t Hooft}.
\newblock Dimensional reduction in quantum gravity.
\newblock In {\em Salamfestschrift}, pages 284--296. World Scientific, 1993.
\newblock Also available as arXiv:gr-qc/9310026.

\bibitem{Vedral2010}
Vlatko Vedral.
\newblock {\em Decoding Reality: The Universe as Quantum Information}.
\newblock Oxford University Press, 2010.

\bibitem{Wheeler1990}
John~A. Wheeler.
\newblock Information, physics, quantum: The search for links.
\newblock In {\em Complexity, Entropy and the Physics of Information}, pages
  3--28. Addison-Wesley, 1990.

\bibitem{Zurek2003}
Wojciech~H. Zurek.
\newblock Decoherence, einselection, and the quantum origins of the classical.
\newblock {\em Reviews of Modern Physics}, 75(3):715--775, 2003.

\bibitem{Zuse1969}
Konrad Zuse.
\newblock {\em Rechnender Raum}.
\newblock Friedrich Vieweg \& Sohn, Braunschweig, 1969.
\newblock English translation: Calculating Space, MIT Technical Translation
  AZT-70-164-GEMIT, 1970.

\end{thebibliography}

\end{document}